\newtheorem{theorem}{Theorem}
\newtheorem{lemma}{Lemma}
\newtheorem{conj}{Conjecture}
\title{Optimal vaccination at high reproductive numbers: sharp transitions and counter-intuitive allocations}
\date{February 2022}
\author[a]{Nir Gavish\footnote{Corresponding author: ngavish@technion.ac.il}
}
\author[b]{Guy Katriel} 
\affil[a]{Faculty of Mathematics, Technion - IIT, Haifa 32000, Israel}
\affil[b]{Department of Applied Mathematics, ORT Braude College of Engineering, Karmiel 216100, Israel}
\begin{document}
\maketitle
\begin{abstract}
Optimization of vaccine allocations among different segments of a heterogeneous population is important for enhancing the effectiveness of vaccination campaigns in reducing the burden of epidemics. Intuitively, it would seem that allocations designed to minimize infections should prioritize those with the highest risk of being infected and infecting others. This prescription is well supported by vaccination theory, e.g., when the vaccination campaign aims to reach herd immunity. In this work, we show, however, that for vaccines providing partial protection (leaky vaccines) and for sufficiently high values of the basic reproduction number, intuition is overturned: the optimal allocation for minimizing the number of infections prioritizes the vaccination of those who are {\em least} likely to be infected. Furthermore, we show that this phenomenon occurs at a range of basic reproduction numbers relevant for the currently circulating strains of SARS-CoV-19. The work combines numerical investigations, asymptotic analysis for a general model, and complete mathematical analysis in a simple two-group model. The results point to important considerations in managing vaccination campaigns for infections with high transmissibility.
\end{abstract}

\section{Introduction}
Vaccination is the foremost tool for reducing the burden of contagious diseases. Mathematical models describing the spread of epidemics are widely used to explore the effect of vaccination on transmission dynamics and epidemic outcomes~\cite{ctx10278070460003971,fox1971herd,fine1993herd,anderson1992infectious}.   Heterogeneity of the population along various dimensions plays a vital role in shaping transmission dynamics. Different sub-populations, such as age groups or occupational groups, have different levels of contact among themselves and with others and may vary in their intrinsic susceptibility to being infected due to biological factors. Heterogeneity provides important opportunities for enhancing the effectiveness of vaccination campaigns by prioritizing the vaccination of some sub-populations  \cite{hethcote1987epidemiological}. This has led to a vast literature addressing the question of optimal vaccination in heterogeneous populations, under constraints on the number of vaccines that can be administered, with various aims, such as achieving herd immunity~\cite{ball2004stochastic,hill2003critical,duijzer2016most,Bai2021}, minimizing the initial growth rate of an emerging epidemic \cite{wallinga2010optimizing} or minimizing epidemic outcomes - the total number of infections, morbidity or mortality ~\cite{keeling2012optimal,kuddus2021mathematical,medlock2009optimizing,bansal2006comparative,yaari2016model}. 
These efforts have naturally received widespread attention in the wake of the COVID-19 pandemic~\cite{Matrajteabf1374,meehan2020age,Bubar916,islam2021evaluation,moore2021vaccination,Gavish2021.04.26.21256101}.  Yet, while most  studies  of optimal vaccination %as applied to specific epidemic outbreaks 
have involved transmission at relatively low reproduction numbers, e.g., associated with Influenza, often within the range of values for which herd immunity can be achieved, recent lineages of SARS-CoV-19 are associated with high reproduction numbers ($R_0$)~\cite{burki2021omicron,liu2021reproductive,ke2021estimating}. 
In this work, we show that high $R_0$ can lead to a dramatic and surprising change in the optimal allocation of vaccines.

We consider here the problem of optimal distribution of a {\em leaky} vaccine  - one providing partial protection - 
to a 
heterogeneous population, to minimize the total number of infections during an epidemic.

\begin{figure}[ht!] 
\begin{center}
   \includegraphics[width=0.75\textwidth] {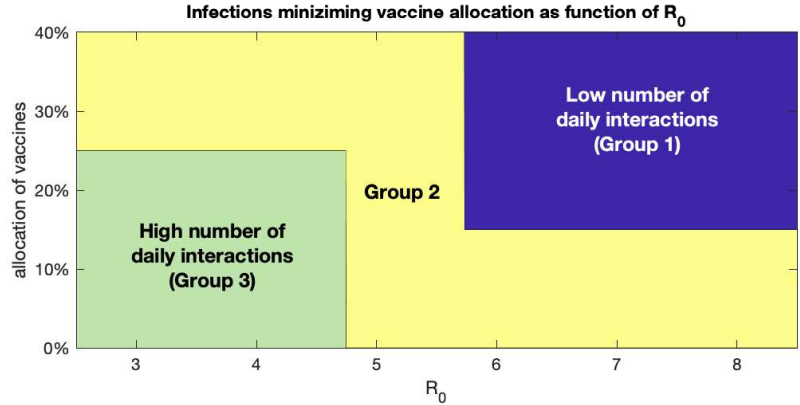} 
   \caption{{\bf Example with three groups.} Individuals of each of the groups interact with those of all other groups (random mixing), but members of group 2 (50\% of the population) have twice as many daily interactions as members of group 1  (25\% of the population), and members of group 3 (25\% of the population) have twice as many interactions as members of group 2. We consider the optimal allocation of vaccine to 40\% of the population, where vaccine efficacy in blocking transmission is 80\%.  Graph presents the optimal vaccine allocations as a function of~$R_0$.  At sufficienly high~$R_0$ it is optimal to vaccinate the entire
{\em low-contact} group, with the remaining vaccines allocated to the medium contact group so that the high contact group is left entirely non-vaccinated. }\label{fig:3group}
   \end{center}
\end{figure}
We first demonstrate this claim using a simple example that considers a population subdivided into three sub-populations differing by their number of daily contacts: low-contact, medium-contact, and high-contact groups, but with homogeneous mixing among all individuals.  
Figure \ref{fig:3group} shows the optimal allocation for minimizing the number of infections %in the case of 80\% vaccine efficacy
for a given vaccine coverage\footnote{See the figure's caption for details on the parameters in this example}, in dependence on the reproductive number $R_0$. As expected, for lower values of $R_0$ ($R_0<4.7$), the optimal allocation covers the entire high-contact group, with the remaining vaccines allocated to the medium-contact group.  However, for~$R_0>5.8$
it is optimal to vaccinate the entire
{\em low-contact} group, with the remaining vaccines allocated to the medium contact group so that the high contact group is left entirely non-vaccinated! This `paradoxical' effect, whereby it is sometimes optimal to vaccinate those less prone to infection while leaving those more prone to infection unprotected, is the focus of this work. 

At an intuitive level, the seemingly paradoxical optimal allocation of vaccines to those who are 
at lower risk of being infected can be understood as follows: for sufficiently high $R_0$, breakthrough infections among those who are more prone to infection become so common that it becomes inefficient to allocate vaccines to those groups. Instead, the optimal strategy is to move to a second line of defense and vaccinate sub-populations that are less prone to infection.

We further claim that the transition to infections-minimizing allocations which prioritize those who are {\em least} prone to infection occurs for values of the parameter~$R_0$ which are relevant to SARS-CoV-19. 
\begin{figure}[ht!]
\centering
   \includegraphics[width=0.75\textwidth]{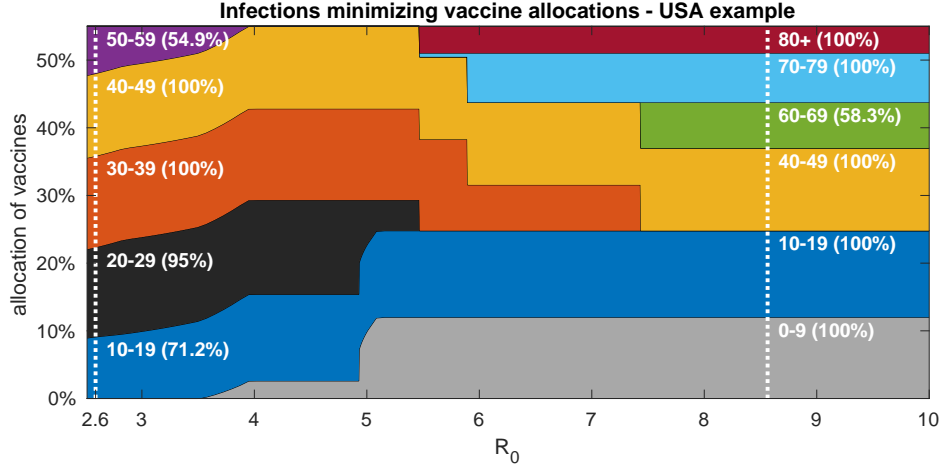} 
\caption{{\bf Example using the USA demographic and social structure.}
Optimal vaccine allocations for minimizing the overall number of infections for vaccine coverage of 55\% of the population with vaccine efficacy of~80\% in blocking transmission. For low values of~$R_0$, vaccination of spreaders is the best way to minimize the overall number of infections. However, for~$R_0>6$, it becomes optimal to allocate vaccines to the elderly, at the expense of vaccine allocation to those who contribute most to spreading.}
\label{fig:US_4group_example_coverage50}
\end{figure}
To demonstrate this point, we present a more realistic example computed using an age-structured transmission model with the estimated contact matrix for the USA. We assume that the susceptibility of children younger than $10$ is 50\% lower than that of older individuals, following estimates for SARS-Cov-19~\cite{davies2020age,dattner2021role,goldstein2021effect}. In this case, we observe that for values of~$R_0>6$, it becomes optimal to allocate vaccines to those of age 60 and older, despite their lower number of contacts, and to children under 10, despite their lower susceptibility, at the expense of vaccine allocation to those who contribute most to the spread of disease, see Figure~\ref{fig:US_4group_example_coverage50}. We have considered various scenarios in eight other countries and observed similar results.  

The analytical results presented in this work rigorously
characterize the phenomenon illustrated by the above examples and show that it is generic. Our key result considers the general case of a population divided into $n$ groups, with a general structure of 
contacts among the groups, and uses asymptotic analysis to obtain an explicit vaccine allocation which is nearly optimal for large $R_0$. The nearly-optimal allocation (i.e., optimal up to an exponentially small error) is to administer vaccines to those
who are {\it{least}} prone to infection, as determined by an {\it{exposure index}} which is computed for each group as a weighted sum of the number of interactions made by members of that group. 
In the example presented in Figure~\ref{fig:US_4group_example_coverage50}, the theoretical `nearly-optimal' allocation in fact coincides with the optimal allocation for~$R_0>7.5$.

To obtain a more refined theoretical 
understanding of the phenomena of interest,
we also consider a minimal case of two groups with different susceptibilities, which is amenable to complete mathematical analysis.  In this case, we prove the existence of a critical threshold $R_0^*$ at which a sharp transition occurs in the optimal allocation, from vaccinating those who are more susceptible to an allocation which prioritizes those who are less susceptible. 

\section{Vaccine allocation at high reproductive number}
We consider optimal allocation of vaccines, using a standard SIR model~\eqref{eq1}-\eqref{eq4}, in which the population is subdivided into $n$ groups, with~$N_j$ the fraction of population in group~$j$.  Thus, $\sum_{j=1}^n N_j=1$.

The asymptotic analysis of vaccine allocations at high~$R_0$ gives rise to a quantity~$e_j$ associated to each group~$j$, which we refer to as the {\it{exposure index}} of members of the group:
\begin{equation}\label{eq:exposureIndex}
e_j=\sigma_j \sum_{k=1}^n C_{jk}\eta_k.
\end{equation}
Here,~$C_{jk}$ is the mean number of contacts of a single member of group $j$ with members of group $k$ per unit time, $\sigma_j$ is the susceptibility of members of group $j$ relative to those of group $1$, and $\eta_j$ is the infectivity of 
members of group $j$ relative to those of group $1$ (thus $\sigma_1=\eta_1=1$). We remark that 
the quantities $e_j$ can be interpreted as proportional to the probabilities per unit time of an individual of group $j$ to be infected, assuming this individual is placed in a population all of whose members are infected.

We assume a vaccine provides imperfect leaky protection \cite{halloran2010design}, reducing susceptibility by a factor $\varepsilon\in (0,1)$ (thus $1-\varepsilon$ is the vaccine efficacy). Vaccination is performed before the introduction of the pathogen. Vaccine allocation is defined by a vector $\vec{v}$
whose components ~$0\le v_j\le N_j$, are the fractions of the population in groups $j=1,\cdots,n$ which are vaccinated.  
The final size of the epidemic is computed as the solution of a standard final size equation (FSE) system, see~\eqref{eq:FSE}. 
We control the reproductive number $R_0$ by varying the transmission parameter $\beta$, keeping all other parameters fixed,
and denote the final size by
$Z(R_0,\vec{v})$.
To find the allocation minimizing the final size, given 
the total coverage $v$, we solve
the optimization problem
\begin{equation}\label{opt1}{\mbox{minimize  }} Z(R_0,\vec{v}),\;\;\; 0\leq v_j\leq N_j,\;\; \sum_{j=1}^n v_j=v.
\end{equation}

The following result shows that for sufficiently high 
values of $R_0$, prioritizing the vaccination of the 
groups with {\it{lowest}} 
exposure index provides a 
nearly-optimal vaccine allocation.

\begin{theorem}\label{thm:main_asymptotic}   Assume, without loss of generality, that the groups are ordered in non-decreasing order of their exposure indices given by~\eqref{eq:exposureIndex},~$e_1\le e_2 \le\cdots \le e_n$.

Then, for given vaccine coverage~$v$, the vaccine allocation
\begin{equation}\label{eq:optimalAllocation}
\vec{v}^*=\left(N_1,N_2,\cdots,N_{j^*},v-\sum_{k=1}^{j^*}N_k,0,\cdots,0\right)
\end{equation}
where $j^*$ is the maximal index for which~$\sum_{k=1}^{j^*}N_j\leq v$, 
gives rise to a nearly minimal final size (overall number of infections)
in the sense that, defining $\vec{v}_{optimal}$ to be a minimizer of \eqref{opt1}, we have, for some $c>0$,
\begin{equation}\label{eq:error}
E=\frac{Z(R_0,\vec v^*)-Z(R_0,\vec v_{\rm optimal})}{1-Z(R_0,\vec v_{\rm optimal})}=O(e^{-c R_0}),\;\;\;{\mbox{as}}\;\;R_0\rightarrow\infty.
\end{equation}
\end{theorem}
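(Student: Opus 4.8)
The plan is to recast problem \eqref{opt1} in terms of escape (never-infected) probabilities and to show that at large $R_0$ the dependence of the final size on the allocation collapses onto a single linear functional whose maximizer is exactly $\vec v^*$. First I would rewrite the final-size system \eqref{eq:FSE} in survival form. Letting $X_j=X_j(R_0,\vec v)$ denote the cumulative force of infection felt by a fully susceptible member of group $j$, the final-size relations give that an unvaccinated member of group $j$ escapes infection with probability $e^{-X_j}$ and a vaccinated one with probability $e^{-\varepsilon X_j}$, where the $X_j$ solve the coupled system $X_j=(\beta/\gamma)\,\sigma_j\sum_{k}C_{jk}\eta_k\,z_k$ with $z_k$ the infected fraction of group $k$. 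Writing $\mathrm{esc}(\vec v)=1-Z(R_0,\vec v)$, this yields
\[
\mathrm{esc}(\vec v)=\sum_{j=1}^n\Big[(N_j-v_j)e^{-X_j}+v_j e^{-\varepsilon X_j}\Big].
\]
Since $\vec v_{\rm optimal}$ maximizes $\mathrm{esc}$, the quantity $E$ of \eqref{eq:error} is $E=1-\mathrm{esc}(\vec v^*)/\mathrm{esc}(\vec v_{\rm optimal})\ge0$, and the goal becomes to show that $\vec v^*$ captures all but an exponentially small fraction of the optimal escape.

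The technical heart---and what I expect to be the main obstacle---is to control the $X_j$ uniformly over the feasible set. Here I would exploit saturation: for large $R_0$ every group is driven to near-complete infection. Assuming the contact matrix is irreducible and each $e_j>0$, a crude uniform lower bound $z_k\ge z_{\min}>0$ (the epidemic reaches every group regardless of $\vec v$) gives $X_j\ge cR_0$; bootstrapping once through $1-z_k\le e^{-\varepsilon cR_0}$ then yields, uniformly in $\vec v$,
\[
\varepsilon X_j=\varepsilon(\beta/\gamma)e_j+O\!\big(R_0 e^{-\varepsilon cR_0}\big),\qquad e^{-\varepsilon X_j}=e^{-\varepsilon(\beta/\gamma)e_j}\big(1+O(R_0e^{-c'R_0})\big),
\]
using $e_j=\sigma_j\sum_kC_{jk}\eta_k$ from \eqref{eq:exposureIndex}. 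Because $\varepsilon<1$, the same bounds show the unvaccinated contributions $e^{-X_j}$ are exponentially smaller than the vaccinated ones, so they may be absorbed into the relative error. Making the lower bound $z_k\ge z_{\min}$ genuinely uniform over all allocations---including those that concentrate vaccine on the groups feeding a given group---is the delicate point, and I would handle it by a monotone-iteration argument on the fixed-point map defining $(X_1,\dots,X_n)$.

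With the $X_j$ now allocation-independent to leading order, $\mathrm{esc}(\vec v)=F(\vec v)\big(1+O(R_0e^{-c'R_0})\big)$, where $F(\vec v)=\sum_j v_j e^{-\varepsilon(\beta/\gamma)e_j}$ is linear in $\vec v$. Since $e_1\le\cdots\le e_n$, the coefficients $e^{-\varepsilon(\beta/\gamma)e_j}$ are nonincreasing in $j$, so a standard exchange/rearrangement argument shows that the greedy, bottom-loaded allocation $\vec v^*$ of \eqref{eq:optimalAllocation} maximizes $F$ over $\{0\le v_j\le N_j,\ \sum_j v_j=v\}$; hence $F(\vec v^*)\ge F(\vec v_{\rm optimal})$.

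Finally I assemble the estimate. Using $F(\vec v_{\rm optimal})\le F(\vec v^*)$ together with the multiplicative form above,
\[
\mathrm{esc}(\vec v_{\rm optimal})-\mathrm{esc}(\vec v^*)\le F(\vec v^*)\,O\!\big(R_0e^{-c'R_0}\big),
\]
while the denominator satisfies $\mathrm{esc}(\vec v_{\rm optimal})\ge \mathrm{esc}(\vec v^*)\ge\tfrac12\min(N_1,v)\,e^{-\varepsilon(\beta/\gamma)e_1}$, which is of the same dominant order $e^{-\varepsilon(\beta/\gamma)e_1}$ as $F(\vec v^*)$ (the unvaccinated terms being exponentially smaller and not affecting the order). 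Dividing, and using that $R_0\propto\beta$ with all other parameters fixed, gives $E=O(R_0e^{-c'R_0})=O(e^{-cR_0})$ for any $c<c'$, which is the bound \eqref{eq:error}.
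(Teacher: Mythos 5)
Your proposal is correct and takes essentially the same route as the paper: your saturation step (every $z_j\to N_j$ exponentially fast, uniformly over allocations, so that $1-Z$ collapses onto the linear functional $\sum_j v_j e^{-\varepsilon(\beta/\gamma)e_j}$) is precisely Lemma~\ref{lem:asymptotic}, proved in Appendix A.2 via $w_j=\beta(N_j-z_j(\beta))$ rather than your forces of infection $X_j$; the greedy maximization of that functional is the paper's linear problem \eqref{opt2}, whose solution is $\vec v^*$; and your final sandwich estimate reproduces the chain \eqref{ee1}--\eqref{ee4}. The remaining differences are cosmetic: you are somewhat more explicit than the paper about the bootstrap needed for uniformity in $\vec v$ (the paper simply asserts $w_j(\beta)=O(e^{-c\beta})$), your displayed fixed-point system for $X_j$ omits the $1/N_k$ normalization present in $D_{jk}$ (though the limit $(\beta/\gamma)e_j$ you then use is the correct one), and your absorption of the unvaccinated contributions $e^{-X_j}$ into the relative error is exactly the step the paper performs inside its lemma.
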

Proof: See Section 5.3 and Appendix A. $\Box$

The example given in Figure~\ref{fig:US_4group_example_coverage50}
illustrates the results of Theorem~\ref{thm:main_asymptotic}, using parameters corresponding to the USA demography and contact structure~\cite{prem2020projecting} with nine 10-year age groups, and with an age-dependent susceptibility profile for SARS-CoV-19 in which the susceptibility of children is reduced~\cite{davies2020age}. Under these conditions, age group 0-9 has the smallest exposure index, followed by,  with increasing exposure index, age groups 70-79, 80+, 10-19, 40-49, 60-69, 50-59, and 20-29. Finally, the age group 30-39 has the largest exposure index. Accordingly, in this case,  by Theorem~\ref{thm:main_asymptotic}, for sufficiently high ~$R_0$, it is (essentially) optimal to first allocate all vaccines to age group 0-9, and then, based on the availability of supply, $v$, to extend coverage to age groups 70-79, 80+, 10-19, 40-49, 60-69, and so on. Figure~\ref{fig:US_4group_example_coverage50} presents the optimal infection minimizing vaccine allocations as a function of~$R_0$ in the case vaccine efficacy in preventing infections is 80\%. For~$R_0\gtrsim7.5$, we observe that the computed optimal allocation coincides with the asymptotic allocation~$\vec{v}^*$, defined in \eqref{eq:optimalAllocation} in Theorem~\ref{thm:main_asymptotic}.  Among the age groups included in this vaccine allocation, age group 60-69 has the highest exposure index. Therefore, by Theorem~\ref{thm:main_asymptotic}, a gradual decrease of the overall vaccine coverage will lead to a reduction in vaccines allocated to age group 60-69 until this age group is completely removed from the optimal allocation (data not shown).  

Figure~\ref{fig:USExample_attackrate} presents the attack rates corresponding to 
different allocations of vaccines, in dependence on $R_0$, for the same example as in
Figure~\ref{fig:US_4group_example_coverage50}. It is interesting to observe that although
the asymptotic allocation~$\vec{v}^*$ given by \eqref{eq:optimalAllocation}
coincides with the optimal allocation only for~$R_0\gtrsim7.5$, see Figure ~\ref{fig:US_4group_example_coverage50}, already for~$R_0>5.7$ the attack rate 
corresponding to the asymptotic allocation
is very close to the attack rate corresponding to the optimal allocation.
Indeed, for $R_0>5.7$
the relative error $E$ defined by~\eqref{eq:error} is less than 1\%. 
Furthermore, in this regime of~$R_0$,
the asymptotic allocation~$\vec{v}^*$ performs much better than common allocations considered in the literature such as the uniform allocation or allocation of vaccines to those which are most prone to become infected.
We have performed 
similar computations using contact 
matrices estimated for different countries, different levels of vaccine efficacy and coverage, finding that the minimal values 
of $R_0$ for which the relative error 
$E$ is less than $\%1$ fall in the range~$4<R_0<10$ with mean~$R_0\approx6$, see Appendix C.  We therefore conclude that the asymptotic allocation 
given by Theorem~\ref{thm:main_asymptotic} is nearly optimal for parameter values which are relevant to some lineages of SARS-CoV-19.
\begin{figure}[ht!]
\centering
   \includegraphics[width=0.75\textwidth]{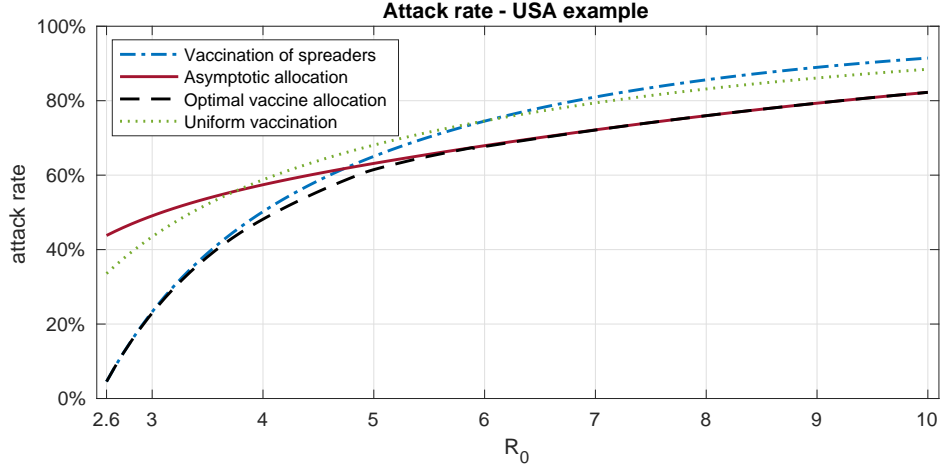}
\caption{{\bf Attack rate as a function of~$R_0$ for different allocations.}
The attack rate (percent of those infected among the population) as a function of~$R_0$ for the example presented in Figure~\ref{fig:US_4group_example_coverage50}. Dashed-dotted blue curve corresponds to the optimal allocation at~$R_0=2.6$, see Figure~\ref{fig:US_4group_example_coverage50}, which prioritizes those who contribute to the spread of the disease. The solid red curve corresponds to the asymptotic allocation~$\vec{v}^*$, see~\eqref{eq:optimalAllocation}, dashed black curve corresponds to the optimal vaccine allocation for minimizing infections, and dotted green curve corresponds to a uniform allocation of vaccines.}    
\label{fig:USExample_attackrate}
\end{figure}

\section{A two-group example}
\label{two-group}

In this section, we consider a minimal model 
allowing us to study how optimal 
vaccine allocations change with the 
reproductive number - the simplicity of the model enables us to obtain a complete mathematical analysis of the phenomena 
of interest. In this model, the population is composed of two
sub-populations (e.g., age groups) with fractions $N_1,\,N_2$ ($N_1+N_2=1$), which
differ only in their susceptibility to infection, with 
the susceptibility of group
$2$ to infection higher than that of group $1$ by a factor $\sigma>1$. 
Homogeneous
mixing among all individuals is assumed. Accordingly, the exposure index of group~$1$ equals~$e_1=1$, and it is smaller than the exposure index~$e_2=\sigma>1$. 
We assume a vaccine provides partial leaky protection with efficacy $1-\varepsilon$.

The dynamics of spread is described by the two-group SIR
model
\begin{eqnarray*}\label{s1}&&S_1'=-\beta S_1 I,\;\;\;V_1'=-\beta \varepsilon  V_1 I,\nonumber\\
&&S_2'=-\sigma\beta S_2 I,\;\;\;V_2'=-\sigma \beta \varepsilon  V_2 I\nonumber\\
&&I'=\beta [ S_1+\sigma  S_2+\varepsilon V_1+\sigma\varepsilon V_2]I-\gamma I,
\end{eqnarray*}
where $S_i(t),V_i(t)$ denote the size of the non-vaccinated and vaccinated parts of the population with lower ($i=1$)
and higher ($i=2$) susceptibility (exposure index), and $I(t)$ denotes the size of the population infected at time $t$, all as fractions of the total population, 
$\beta$ is the transmission rate, and $\gamma^{-1}$ is the 
duration of infectivity. 

We assume that the amount of available vaccine is sufficient to vaccinate a fraction $v$ of the population prior to the start of an epidemic.
We divide this fraction into two parts
$v_1,v_2$ with $v_1+v_2=v$, to be administered to 
each of the groups. For conciseness, we will assume below that 
$v<\min(N_1,N_2)$,
that is the amount available vaccine is insufficient to vaccinate any of the
two groups entirely - analogous results can be 
proved in the cases in which  it is possible to fully vaccinate one of the two groups.
Our initial conditions are thus
$$S_j(0)=N_j-v_j-I_j(0),\;\; V_j(0)=v_j,\;\;I(0)=I_1(0)+I_2(0).\;\; $$
The basic reproductive number (in the absence of vaccination) is $R_0=\frac{\beta}{\gamma}(N_1+\sigma N_2)$.

Our aim is to choose $v_1,v_2$ so as to minimize the fraction of population $Z=Z(v_1,v_2)$ which will be infected during the epidemic.
This fraction is given (in the limit of a small initial fraction of infected individuals, $I_1(0),I_2(0)\rightarrow 0$) as the solution $Z$ of the  final size equation 
\begin{equation}\label{fsz}1- (N_1-v_1)e^{- \frac{\beta}{\gamma}Z}-(N_2-v_2) e^{-\sigma \frac{\beta}{\gamma}Z}- v_1 e^{-\varepsilon \frac{\beta}{\gamma}Z}- v_2e^{-\sigma\varepsilon \frac{\beta}{\gamma}Z}  =Z,\end{equation}
and our optimal allocation problem is
\begin{equation}\label{op}
{\text{minimize}}\;Z(v_1,v_2)\qquad v_1,v_2\geq 0 ,\; v_1+v_2=v.
%\qquad{\text{subject to}}\quad 
\end{equation}
% \begin{eqnarray}&&{\text{minimize}}\;Z(v_1,v_2)\\ &&{\text{subject to}}\;\;v_1,v_2\geq 0 ,\;v_1+v_2=v.\nonumber\;\;\;\end{eqnarray}
Analyzing the above problem, we obtain 
\begin{theorem}\label{thm:main} Assuming $v<\min(N_1,N_2)$, there exists a value $R_0^*$ such that

(i) If $R_0<R_0^*$ then the optimal solution of \eqref{op} is given by
$(v_1,v_2)=(0,v)$.

(ii) If $R>R_0^*$ then the optimal solution of \eqref{op} is given by
$(v_1,v_2)=(v,0)$.

The transition value $R_0^*$ is given explicitly by the formula
$$R_0^*=\frac{(N_1+N_2\sigma)\ln(\frac{1}{\alpha})}{1- N_1\alpha-N_2\alpha^{\sigma}- v(\alpha^\varepsilon -\alpha)},$$
where $\alpha$ is the unique solution of the 
equation
$$\alpha-\alpha^{\sigma}-\alpha^{\varepsilon}+\alpha^{\sigma\varepsilon}=0,\;\;\;\alpha\in (0,1).$$
\end{theorem}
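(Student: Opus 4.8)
The plan is to eliminate the constraint by setting $v_2=v-v_1$ and to study the final size $Z$ as a scalar function $Z(v_1)$ on $[0,v]$, defined implicitly by the final size equation. Writing $\rho=\beta/\gamma$ (so $R_0=\rho(N_1+\sigma N_2)$) and
\[
F(Z,v_1)=1-(N_1-v_1)e^{-\rho Z}-(N_2-v+v_1)e^{-\sigma\rho Z}-v_1e^{-\varepsilon\rho Z}-(v-v_1)e^{-\sigma\varepsilon\rho Z}-Z,
\]
I would first record two elementary structural facts. First, for fixed $v_1$ the map $Z\mapsto F(Z,v_1)$ is strictly concave (its second derivative is a negative combination of decaying exponentials) and vanishes at $Z=0$; hence in the epidemic regime $F_Z(0,v_1)=R_{\mathrm{eff}}(v_1)-1>0$ it has a unique positive root $Z(v_1)$, with $F_Z<0$ there, $F>0$ on $(0,Z(v_1))$ and $F<0$ beyond. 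Consequently, for any $\zeta>0$, $F(\zeta,v_1)>0\iff \zeta<Z(v_1)$. Second, and crucially,
\[
\partial_{v_1}F(Z,v_1)=e^{-\rho Z}-e^{-\sigma\rho Z}-e^{-\varepsilon\rho Z}+e^{-\sigma\varepsilon\rho Z}=h\!\left(e^{-\rho Z}\right),\qquad h(\alpha):=\alpha-\alpha^{\sigma}-\alpha^{\varepsilon}+\alpha^{\sigma\varepsilon},
\]
which depends on $Z$ alone and is exactly the expression whose root defines $\alpha$ in the theorem.

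Differentiating $F(Z(v_1),v_1)=0$ via the implicit function theorem gives $Z'(v_1)=-h(e^{-\rho Z})/F_Z$, and since $F_Z<0$ at the root, $\operatorname{sign}Z'(v_1)=\operatorname{sign}h(e^{-\rho Z(v_1)})$. I would then analyze $h$ on $(0,1)$. Putting $\alpha=e^{-s}$ turns $h$ into an exponential sum whose coefficient sequence ordered by exponent is $-,+,+,-$, i.e.\ two sign changes, so by the Descartes/P\'olya sign rule for exponential sums it has at most two real zeros; combined with $h(\alpha)\sim-\alpha^{\varepsilon}<0$ as $\alpha\to0^+$, with $h(1)=0$, $h'$ giving $h(1-u)\sim(\sigma-1)(1-\varepsilon)u>0$, this pins down a \emph{unique} root $\alpha\in(0,1)$ with $h<0$ below it and $h>0$ above. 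Setting $Z_c:=\rho^{-1}\ln(1/\alpha)$, we then have $h(e^{-\rho Z})>0\iff Z<Z_c$, so $\operatorname{sign}Z'(v_1)=\operatorname{sign}\bigl(Z_c-Z(v_1)\bigr)$.

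The conceptual crux is that the defining relation $h(\alpha)=0$ forces the coefficient of $v_1$ in $F(Z_c,v_1)$ to vanish: $\partial_{v_1}F(Z_c,v_1)=h(\alpha)=0$, so $F(Z_c,\cdot)\equiv\kappa$ is \emph{independent of the allocation}. By the concavity fact, $\operatorname{sign}\bigl(Z(v_1)-Z_c\bigr)=\operatorname{sign}F(Z_c,v_1)=\operatorname{sign}\kappa$ for every $v_1\in[0,v]$; thus $Z(v_1)-Z_c$, and with it $Z'(v_1)$, has a fixed sign along the whole segment, so $Z(v_1)$ is strictly monotone. If $\kappa<0$ then $Z<Z_c$, $Z'>0$, and the minimiser is $(v_1,v_2)=(0,v)$; if $\kappa>0$ then $Z>Z_c$, $Z'<0$, and the minimiser is $(v_1,v_2)=(v,0)$. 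To locate the transition I would collect the $v_1$-independent terms of $F(Z_c,\cdot)$ and use $h(\alpha)=0$, i.e.\ $\alpha^{\sigma}-\alpha^{\sigma\varepsilon}=\alpha-\alpha^{\varepsilon}$, to obtain $\kappa=B-Z_c$ with $B=1-N_1\alpha-N_2\alpha^{\sigma}-v(\alpha^{\varepsilon}-\alpha)$ and $Z_c=(N_1+\sigma N_2)\ln(1/\alpha)/R_0$. Hence $\kappa(R_0)=B-(N_1+\sigma N_2)\ln(1/\alpha)/R_0$ is strictly increasing in $R_0$ with a single zero precisely at the stated $R_0^*=(N_1+N_2\sigma)\ln(1/\alpha)/B$, negative for $R_0<R_0^*$ (case (i)) and positive for $R_0>R_0^*$ (case (ii)).

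The genuinely delicate points I expect are (i) making the sign-rule argument for $h$ fully rigorous, in particular excluding a second zero of $h$ in $(0,1)$, and (ii) controlling the regime of validity: one must check that for the relevant $R_0$ an outbreak occurs for \emph{every} admissible allocation (so that $Z(v_1)$ is a well-defined positive root and the concavity sign-equivalence applies at $\zeta=Z_c$), and that $B>0$ and $Z_c\in(0,1)$, so that $R_0^*$ is positive and exceeds the outbreak threshold. The step that makes everything collapse, however, is the single observation that $h(\alpha)=0$ renders $F(Z_c,\cdot)$ allocation-independent; this decouples ``which side of $Z_c$'' from $v_1$ and reduces the optimisation to one monotonicity statement.
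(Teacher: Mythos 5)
Your proposal is correct, and it is organized differently from --- and more economically than --- the paper's proof. The paper proceeds in two separate stages: it first compares the two single-group allocations, showing via the functions $f_1,f_2$ and an auxiliary function $\phi(r)=rZ_1(r)$ (Lemmas~\ref{bp}, \ref{c2}, \ref{ad}) that vaccinating the more susceptible group is better below $r^*$ and worse above it; it then shows separately (Appendix B.3) that $\zeta(v_1)=Z(v_1,v-v_1)$ has no interior critical point unless $r=r^*$, so the optimum sits at an endpoint. You fuse both stages into a single monotonicity argument: because $\partial_{v_1}F(Z,v_1)=h(e^{-\rho Z})$ vanishes precisely at $Z=Z_c=\rho^{-1}\ln(1/\alpha)$, the quantity $\kappa=F(Z_c,\cdot)=B-Z_c$ is allocation-independent, and its sign simultaneously places every $Z(v_1)$ on one side of $Z_c$, fixes the sign of $Z'(v_1)$ on all of $[0,v]$, and --- being increasing in $R_0$ --- yields the stated threshold $R_0^*=(N_1+N_2\sigma)\ln(1/\alpha)/B$ directly. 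This makes the paper's Lemma~\ref{ad} (monotonicity of $\phi$ and the identification $\bar r=r^*$) unnecessary, and the boundary/interior dichotomy disappears; as a byproduct you also recover the paper's observation that $\zeta$ is constant when $\kappa=0$, i.e.\ at $R_0=R_0^*$. The common core is identical, however: everything rests on your $h$ (the paper's $g$) having a unique zero $\alpha\in(0,1)$ with a single sign change. The paper proves this uniqueness by a double application of Rolle's theorem, which is in effect a hands-on proof of the exponential-sum Descartes rule you invoke, so the two are interchangeable.

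Your two flagged caveats are genuine but easily closed, and closing them is needed since the theorem covers all $R_0$, including sub-threshold values. Since $R_{\rm eff}(v_1)$ is affine in $v_1$ with slope $\rho(\sigma-1)(1-\varepsilon)>0$, it is minimized at $v_1=0$; hence either the allocation $(0,v)$ prevents any outbreak --- in which case it is trivially optimal, consistent with case (i) --- or every admissible allocation produces an outbreak and your implicit-function argument applies on all of $[0,v]$. Consistency of case (ii) additionally requires $R_0^*>R_0^{\rm crit}$ (the paper's $r^*>r_2$): this follows because $h(\alpha)=0$ lets you rewrite $B=1-N_1\alpha-(N_2-v)\alpha^\sigma-v\alpha^{\sigma\varepsilon}$, and the strictly concave function $s\mapsto 1-N_1e^{-s}-(N_2-v)e^{-\sigma s}-ve^{-\sigma\varepsilon s}$ vanishes at $s=0$ with slope $1/r_2$, whence $B<\ln(1/\alpha)/r_2$ and so $r^*=\ln(1/\alpha)/B>r_2$. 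The paper sidesteps this issue by assuming $r>r_2$ throughout and treating the sub-threshold regime as trivial; with the observation above, your argument does the same.
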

Proof: See Appendix B. $\Box$

Thus it is always optimal to allocate the entire stock of vaccines to 
only one of the groups, but there is a switch between vaccinating the 
{\it{more}} susceptible population, i.e., group 2, when $R_0$ is low to vaccinating the
{\it{less}} susceptible population, i.e., group 1, when $R_0$ is high.

A numerical example illustrating the contents of Theorem \ref{thm:main} is given in Figure~\ref{fig:TwoGroupExample}.
\begin{figure}[ht!] \label{two_group}
\begin{center}
    \includegraphics[width=0.75\textwidth]{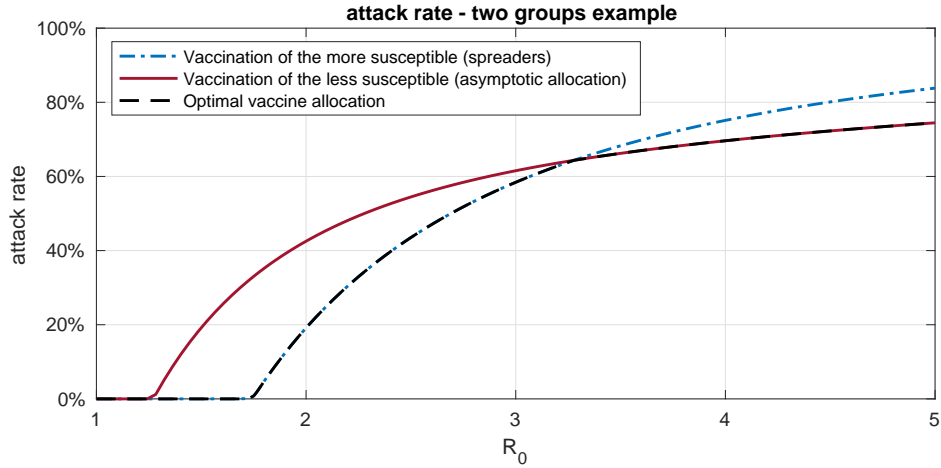} 
   \caption{{\bf Example of two groups with different susceptibilities.} The two groups are of equal size, one of which is 50\% as susceptible as the other. 40\% of the population can be vaccinated, with a vaccine that is 80\% effective in blocking transmission. The graph presents final sizes as a function of the basic reproduction number resulting from the allocation of vaccines to those of higher susceptibility (red curve), lower susceptibility (blue curve), and the optimal allocation (black dashed curve). }\label{fig:TwoGroupExample}
   \end{center}
\end{figure}
This figure reveals why the optimal vaccine allocation changes abruptly from allocations prioritizing the more susceptible to allocations prioritizing the less susceptible. Indeed, at low values of~$R_0$, the epidemic final size resulting from allocations prioritizing the more susceptible is lower than that resulting from vaccinating the less susceptible. However the final size curves intersect at the point~$R_0^{\rm crit}$, which is the point at which optimal allocation shifts to the more 
susceptible group. To prove Theorem \ref{thm:main}
one must also show that allocating some of the vaccine
to each of the groups is always inferior to one of the extreme allocations reserving
the entire stock of vaccines to only one group, see Appendix B.3.

\section{Discussion}

The results in this work reveal phenomena that have not been described in previous literature, to the best of our knowledge. These phenomena are relevant to circulating and plausibly to future lineages of SARS-CoV-19. They are also of theoretical importance, allowing us to refine our intuitions in thinking about the allocation of vaccines. Therefore, we believe the results should be of interest to researchers and public health experts involved in
vaccination policies.  

We have considered vaccination of a heterogeneous population, employing a vaccine that reduces the 
risk of infection (leaky vaccine), with the aim 
of minimizing the number of infections, and have shown that the optimal allocation of vaccines
displays a quite complicated dependence on the reproductive number $R_0$, with continuously varying rates in some intervals, constant allocation in others, and discontinuous shifts.
In particular, for sufficiently large
$R_0$ the optimal vaccine allocation has the 
counter-intuitive feature of favoring allocation of vaccines to the groups which are less prone
to infection, as measured by the exposure indices.
We have justified this result analytically, by
means of an asymptotic approximation which shows that
for sufficiently large values of $R_0$ the 
allocation of vaccines to the groups with
lower exposure indices result in a nearly-optimal allocation. Moreover, strongly supported by numerical results, we make the following
\begin{conj}
For sufficiently large 
$R_0$ the optimal allocation $\vec{v}_{\rm optimal}$ solving
\eqref{opt1} {\bf coincides}
with the asymptotic allocation~$\vec{v}^*$ given by \eqref{eq:optimalAllocation}.
\end{conj}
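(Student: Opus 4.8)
The plan is to strengthen Theorem~\ref{thm:main_asymptotic}, which gives optimality of $\vec v^*$ only up to the exponentially small error \eqref{eq:error}, into the exact identity $\vec v_{\rm optimal}=\vec v^*$, using the complete two-group analysis of Section~\ref{two-group} as a blueprint. The conjecture is in fact already a theorem when $n=2$: there the allocation $\vec v^*$ defined by \eqref{eq:optimalAllocation} equals $(v,0)$, i.e. vaccination of the lower-exposure group, which is exactly the optimizer identified by Theorem~\ref{thm:main}(ii) for $R_0>R_0^*$; so in the minimal model the coincidence holds exactly, and from a sharp finite threshold. The task is to show that the mechanism responsible for this — that the sign of a single scalar quantity dictates the outcome of every pairwise reallocation of vaccine — survives for general $n$.

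I would first recast \eqref{opt1} through its first-order optimality conditions. Since the feasible set is the polytope $0\le v_j\le N_j$, $\sum_j v_j=v$ with fixed total coverage, an allocation is optimal if and only if no pairwise transfer of vaccine between two groups decreases $Z$, so the natural object is the \emph{swap derivative} $s_{jk}(\vec v):=\partial_{v_j}Z-\partial_{v_k}Z$, the marginal change in final size when an infinitesimal dose is moved from group $k$ to group $j$, computed by implicit differentiation of the final-size equations. The entire two-group calculation is precisely the evaluation of $s_{12}$: one finds that $s_{12}$ carries the sign of $\phi(w):=e^{-w}-e^{-\varepsilon w}-e^{-\sigma w}+e^{-\sigma\varepsilon w}$ at the cumulative hazard $w=\tfrac{\beta}{\gamma}Z$, where $\phi$ changes sign exactly once, at $w=\ln(1/\alpha)$, and is negative for large $w$, i.e. for large $R_0$. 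Since $e_1<e_2$, this is exactly the statement that transferring vaccine toward the lower-exposure group reduces $Z$ at high $R_0$.

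The crux is the $n$-group analogue: for every pair with $e_j<e_k$ and every feasible $\vec v$, one should have $s_{jk}(\vec v)<0$ once $R_0$ exceeds a threshold. The mechanism is that, as $R_0\to\infty$, the cumulative force of infection diverges and the probability that a vaccinated member of group $\ell$ escapes infection decays like $e^{-\varepsilon e_\ell\,\Theta}$ for a common diverging scale $\Theta=\Theta(R_0)$, the rate being set by the exposure index \eqref{eq:exposureIndex}; the marginal value of a dose in group $\ell$ is governed by this escape probability, so $s_{jk}$ is dominated by the slower-decaying term and inherits the sign of $e_k-e_j$. Granting this sign result globally, a rearrangement argument closes the proof: if $\vec v\neq\vec v^*$ then, by the water-filling structure of \eqref{eq:optimalAllocation}, there exist a lower-exposure group below its cap and a higher-exposure group carrying vaccine, so the corresponding swap strictly lowers $Z$; hence $\vec v^*$ is the unique allocation admitting no improving transfer, and therefore the unique global minimizer.

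The main obstacle is precisely the gap between the asymptotic optimality of Theorem~\ref{thm:main_asymptotic} and the exact statement, and I foresee three difficulties. First, for a general contact matrix the final-size system is a coupled vector fixed-point problem, so the reduction to the single scalar hazard $w$ that makes the two-group sign analysis so transparent holds only to leading order; the subleading vector corrections to the escape probabilities must be bounded uniformly over the \emph{entire} feasible polytope and for \emph{all} large $R_0$ at once, not merely at $\vec v^*$. Second, the leading exponential fixes the sign of $s_{jk}$ only where the competing rates $\varepsilon e_j,\varepsilon e_k$ are well separated, so one must rule out sign reversals in the transitional regions where several exponentials are comparable — the genuinely global estimate that plays the role of the single-sign-change property of $\phi$ in the two-group case. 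Third, the generic hypothesis $e_1<\cdots<e_n$ must be relaxed: when exposure indices coincide the swap is degenerate, $Z$ is flat along that direction, and $\vec v^*$ is only one of a continuum of optima, so the exact-coincidence claim should be read up to permutations within tied blocks. I expect the first of these — propagating the scalar-limit sign of $s_{jk}$ to a uniform estimate for the coupled system over the whole polytope — to be the decisive step.
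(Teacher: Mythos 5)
You should first be clear about the status of what you are proving: this statement is an \emph{open conjecture} in the paper. The authors prove it only in the two-group case (Theorem~\ref{thm:main}) and state explicitly that the proof in the general case remains an open challenge, so there is no general proof in the paper against which your sketch can be matched. Your treatment of the $n=2$ case is faithful to the paper's: your swap derivative $s_{12}$ and your single-sign-change function $\phi(w)=e^{-w}-e^{-\sigma w}-e^{-\varepsilon w}+e^{-\sigma\varepsilon w}$ are exactly the objects of Appendix B.3, where implicit differentiation of $\zeta(v_1)=Z(v_1,v-v_1)$ shows that an interior critical point forces $g\left(e^{-r\zeta}\right)=0$ with $g(x)=x-x^\sigma-x^\varepsilon+x^{\sigma\varepsilon}$, and uniqueness of the root $\alpha$ of \eqref{xse} pins down the single threshold $r^*$; your observation that for $n=2$ the allocation \eqref{eq:optimalAllocation} is $(v,0)$, which is the exact optimizer of Theorem~\ref{thm:main}(ii), is also correct.

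For general $n$, however, there is a genuine gap, and it is precisely the step you defer as ``the decisive step'': the claim that $s_{jk}(\vec{v})<0$ for \emph{every} feasible $\vec{v}$ and every pair with $e_j<e_k$, once $R_0$ is large. This cannot be obtained from the paper's machinery, nor from your escape-probability heuristic. Lemma~\ref{lem:asymptotic} controls the \emph{value} $1-Z(\beta,\vec{v})$ up to a relative error $O(e^{-c\beta})$, uniformly in $\vec{v}$, but it says nothing about $\partial_{v_j}Z$, and an asymptotic identity cannot be differentiated without uniform control of the derivative of its error term. Nor can one avoid derivatives by comparing values of $Z$ at two allocations: for a transfer of vaccine between two groups $j,k$ with $e_1<e_j<e_k$, the induced relative change in the surrogate objective $\Psi(\beta,\vec{v})=\sum_\ell v_\ell e^{-\varepsilon e_\ell \beta/\gamma}$ is of order $e^{-\varepsilon(e_j-e_1)\beta/\gamma}$, since $\Psi$ is dominated by the group-$1$ term $N_1e^{-\varepsilon e_1\beta/\gamma}$; nothing guarantees that this gap exceeds the $O(e^{-c\beta})$ error of Lemma~\ref{lem:asymptotic}, and when $\varepsilon(e_j-e_1)>c$ the error swamps the gap entirely. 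Value-level asymptotics are therefore structurally incapable of deciding how vaccine is distributed among the higher-exposure groups — which is exactly why the paper stops at the near-optimality statement \eqref{eq:error}. Closing the conjecture requires a new, derivative-level, sign-definite asymptotic for the coupled final-size system, uniform over the whole polytope including the transitional regimes where several exponentials are comparable; your proposal identifies this correctly as the missing estimate but does not supply it, so it is a sound research plan rather than a proof. (Your caveat about ties is also correct: if some $e_j=e_k$ the minimizer is non-unique, so the conjecture must be read under a genericity assumption or up to permutations within tied blocks.)
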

We have rigorously 
proved this conjecture in the two-group case considered in Theorem~\ref{thm:main}. 
The proof of this conjecture in the general case remains an interesting challenge.

We note that our results hinge on the leaky nature of vaccines. For example, in the explicit two-group example, the transition value~$R_0^*$, at which it becomes advantageous to vaccinate the less susceptible group,  goes to infinity in the limit $\varepsilon\rightarrow 0$ of perfect vaccine protection. Therefore, optimal allocations of perfect vaccines or of all-or-none vaccines will not undergo sharp transitions at high~$R_0$.  

It should be stressed that the phenomena which have been revealed are not merely theoretical curiosities, as our numerical examples show they occur for realistic parameter values, in particular for basic reproductive numbers, vaccine efficacy, and coverage values which are in the ranges relevant for the currently circulating strains of SARS-CoV-19.

A salient characteristic of the epidemiology of COVID-19 and other infectious diseases is the fact that the risk of severe disease and mortality is strongly 
age-dependent. In considering the allocation of vaccines, this fact has led to an apparent trade-off between minimizing mortality by direct protection of older age groups which are at higher risk of severe outcomes and minimizing the number of infections by vaccinating younger age groups who play a larger role in the spread of infection 
\cite{goldstein2021vaccinating,dushoff2021transmission,Goldsteine2107654118}. An interesting consequence of the results presented here is that, for sufficiently high
values of $R_0$, such a trade-off
does {\it{not}} exist, since the optimal allocation
for minimizing infections, covers the older age groups, {\it{because}} of the fact that they are less prone to infection. Under such conditions, prioritizing older age groups is optimal both for minimizing mortality {\it{and}}  for minimizing the number of infections.

While the focus here on prophylactic vaccine allocations that are optimal with respect to the overall number of infections, the methods, and the results can be modified to address optimization of other 
outcomes, such as severe cases and mortality, by weighing 
the infections in different age groups and according to vaccination status with appropriate rates of severe
disease and mortality. In a different direction, it is 
also of interest to consider measures such as the peak number of daily cases during the epidemic, as this measure relates directly to the load on healthcare providers.
We have found parameter regimes with high basic reproduction numbers for which vaccination of sub-populations that are least prone to infection gives rise to a lower peak number of cases compared to vaccine allocations prioritizing sub-populations that drive the epidemic spread. These results will be presented elsewhere.  

From a public health point of view, the sharp transition in optimal vaccine allocation to prioritization of sub-populations that are least prone to infection stems from the fact that breakthrough infections among those who are more prone to infection become so common that it becomes less effective to allocate vaccines to those groups. Therefore, the sharp transition reflects a withdrawal to a second line of defense.  We stress that an alternative policy to be considered is to apply non-pharmaceutical interventions to reduce the effective reproduction number to a level that enables avoiding such a withdrawal.  
More generally, it is important to be aware
of the fact that the most effective allocation of public health
resources to alleviating the burden of epidemics, e.g., quarantine policy, green passes, screening tests, and messaging to the public, may vary with the reproductive number, so that emergence of
a variant with higher transmissibility 
requires a re-evaluation of existing policies.

\subsubsection*{Data availability}
\small{All relevant data are within the manuscript, or on Zenodo at https://doi.org/10.5281/zenodo.5984404. The source code is available on GitHub at https://github.com/NGavish/HighR0.}

\subsubsection*{Acknowledgements}
The research of N.G. was supported by the ISRAEL SCIENCE FOUNDATION (grant No. 3730/20) within the KillCorona – Curbing Coronavirus Research Program.

\section{Materials and methods}
\subsection{Model}
We consider an SIR model in which the population is subdivided into $n$ groups. The dynamic variables $S_j$,$I_j$,$R_j$ and $V_j$ are the fractions of the total population consisting of
susceptible, infected, recovered, and vaccinated individuals in group $j$, respectively.  Model parameters are summarized in Table~\ref{tab:parameters}.\begin{table}%[tbhp]
\centering
\caption{Model parameters}\label{tab:parameters}
\begin{tabular}{p{0.1\textwidth}p{0.68\textwidth}}
Parameter & Description\\
\hline
$N_j$ & Fraction of the population in  group $j$. \\
$C_{jk}$ & The mean number of contacts of a single member of group~$j$ with members of group $k$ per unit time. We denote by $C$ the $n\times n$  matrix with elements $C=\{C_{jk}\}_{j,k=1}^n$.\\
 $\sigma_j$ & The susceptibility of 
members of group $j$ relative to those of group $1$ (thus $\sigma_1=1$).\\
$\eta_j$ & The infectivity of 
members of group $j$ relative to those of group $1$ (thus $\eta_1=1$).\\
$\beta$ & Transmission parameter: the probability of infection of a non-vaccinated individual in group $1$ upon contact with a member of group $1$.\\
$\gamma$ & The recovery rate so that $\frac{1}{\gamma}$
is the mean duration of infectivity.\\
$v_j\in[0,1]$ & The fraction of the population in group $j$ which is vaccinated.\\
 $1-\varepsilon$ & Vaccine efficacy against infection, so that 
$\varepsilon$ is the factor by which the probability of infection upon contact is reduced for those vaccinated. \\
\hline
\end{tabular}
\end{table}

The dynamics is described by the differential equation system
\begin{equation}\label{eq1}S_j^\prime(t)=-\beta S_j(t)\sum_{k=1}^n D_{jk}\cdot I_k(t),\end{equation}
\begin{equation}\label{eq2}V_j^\prime(t)=-\varepsilon \beta V_j(t)\sum_{k=1}^n D_{jk}\cdot I_k(t),
\end{equation}
\begin{equation}\label{eq3}
I_j^\prime (t) =\beta (S_j(t)+\varepsilon V_j(t))\sum_{k=1}^n D_{jk}\cdot I_k(t)-\gamma I_j,
\end{equation}
\begin{equation}\label{eq4}
R_j'(t)=\gamma I_j(t),
\end{equation}
where
$$D_{jk}=\sigma_j \eta_k \cdot\frac{C_{jk}}{N_k}.$$

Since a proportion $v_j$ of group $j$
is vaccinated, we have 
\begin{equation}\label{ic}V_j(0)=v_j,\;\;S_j(0)=N_j-I_j(0)-R_j(0).\end{equation}

The basic reproductive number ~$R_0$ is equal to the spectral radius of ~$M=\frac{\beta}{\gamma}D$ \cite{diekmann,keeling}.

\subsection{Final size equations}
We will denote the attack rate by~$Z=\sum_{j=1}^m z_j$, where $z_j$ is the fraction of the total population which consisting of individuals of group $j$ who become infected. The quantities 
$z_j$ ($1\leq j\leq n$) are obtained as solutions of a system of final size equations
\begin{eqnarray}\label{eq:FSE}
N_j-z_j&=&(N_j-v_j)\exp\left[-\frac{\beta}{\gamma} \sum_{k=1}^n D_{jk}\cdot z_k\right]\nonumber\\&+&v_j\exp\left[-\varepsilon\frac{\beta}{\gamma} \sum_{k=1}^n D_{jk}\cdot z_k\right],\;\;\;1\leq j\leq n.
\end{eqnarray}

\subsection{Large $R_0$ asymptotics}
Our asymptotic analysis yields the following
\begin{lemma}\label{lem:asymptotic}
We have, for some $c>0$, 
$$\frac{1-Z(\beta,\vec{v})}{\sum_{j=1}^n v_j\exp\left[-\varepsilon\frac{\beta}{\gamma} \cdot e_j\right]}=1+O(e^{-c\beta})\;\;{\mbox{as }},\;\;\beta\rightarrow \infty,$$
where $e_j$ ($1\leq j\leq n$) are the 
exposure indices defined by \eqref{eq:exposureIndex}.
\end{lemma}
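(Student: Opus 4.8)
The plan is to convert the coupled final size system~\eqref{eq:FSE} into a fixed-point relation for the group-wise forces of infection, prove that all of these grow linearly in $\beta$, and then read off the leading balance. For each group introduce the force-of-infection variable
\[\theta_j=\frac{\beta}{\gamma}\sum_{k=1}^n D_{jk}z_k,\]
so that \eqref{eq:FSE} reads $N_j-z_j=(N_j-v_j)e^{-\theta_j}+v_j e^{-\varepsilon\theta_j}$. Writing $\rho_j=N_j-z_j$ for the escaping fraction of group $j$ and using $\sum_j N_j=1$ gives the exact identity
\[1-Z=\sum_{j=1}^n\rho_j=\sum_{j=1}^n\left[(N_j-v_j)e^{-\theta_j}+v_j e^{-\varepsilon\theta_j}\right],\]
so the whole problem reduces to locating the $\theta_j$. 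Substituting $z_k=N_k-\rho_k$ and $D_{jk}=\sigma_j\eta_k C_{jk}/N_k$ into the definition of $\theta_j$ and recognizing the exposure index~\eqref{eq:exposureIndex} yields the clean relation
\[\theta_j=\frac{\beta}{\gamma}\left(e_j-\sigma_j\sum_{k=1}^n\frac{\eta_k C_{jk}}{N_k}\,\rho_k\right),\]
which says $\theta_j$ equals its ``everybody-infected'' value $\frac{\beta}{\gamma}e_j$ minus a correction driven by the escaping fractions $\rho_k$.

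The heart of the argument is a self-consistent lower bound showing every $\theta_j$ is of order $\beta$. Since $\varepsilon<1$, each escape term obeys $(N_k-v_k)e^{-\theta_k}+v_k e^{-\varepsilon\theta_k}\le N_k e^{-\varepsilon\theta_k}$, hence $z_k\ge N_k\left(1-e^{-\varepsilon\theta_{\min}}\right)$ with $\theta_{\min}=\min_j\theta_j$. Feeding this back into the definition of $\theta_j$ gives the scalar inequality
\[\theta_{\min}\ge\frac{\beta}{\gamma}e_{\min}\left(1-e^{-\varepsilon\theta_{\min}}\right),\qquad e_{\min}=\min_j e_j>0.\]
Since the right-hand side is concave and increasing in $\theta_{\min}$ with slope $\frac{\beta}{\gamma}e_{\min}\varepsilon>1$ at the origin for large $\beta$, a short bootstrap forces $\theta_{\min}\ge c\beta$ for some $c>0$. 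Consequently $\rho_k\le N_k e^{-\varepsilon\theta_{\min}}\le N_k e^{-\varepsilon c\beta}$ is exponentially small for every $k$, and plugging this into the relation of the previous step shows $\theta_j=\frac{\beta}{\gamma}e_j-\Delta_j$ with $0\le\Delta_j=O(\beta e^{-\varepsilon c\beta})$, i.e. $\theta_j=\frac{\beta}{\gamma}e_j\left(1+O(e^{-c''\beta})\right)$ uniformly in $j$.

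It remains to insert this into the identity for $1-Z$. For the vaccinated-escape terms, $v_j e^{-\varepsilon\theta_j}=v_j e^{-\varepsilon\frac{\beta}{\gamma}e_j}e^{\varepsilon\Delta_j}$ with $e^{\varepsilon\Delta_j}=1+O(e^{-c''\beta})$ uniformly in $j$, so $\sum_j v_j e^{-\varepsilon\theta_j}=\left(\sum_j v_j e^{-\varepsilon\frac{\beta}{\gamma}e_j}\right)\left(1+O(e^{-c''\beta})\right)$, which is precisely the denominator in Lemma~\ref{lem:asymptotic}. The step I expect to be the main obstacle is the remaining, subdominant, unvaccinated-escape sum $\sum_j(N_j-v_j)e^{-\theta_j}\sim\sum_j(N_j-v_j)e^{-\frac{\beta}{\gamma}e_j}$, which must be shown to be exponentially small \emph{relative} to that denominator. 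Because $\varepsilon<1$, the factor $e^{-\frac{\beta}{\gamma}e_j}$ is smaller than its matching vaccinated term by $e^{-(1-\varepsilon)\frac{\beta}{\gamma}e_j}$, so the comparison succeeds whenever the vaccinated groups include the lowest exposure indices --- in particular for the allocation $\vec v^*$ of Theorem~\ref{thm:main_asymptotic}, and more generally whenever $\min_j e_j>\varepsilon\min_{\{j:\,v_j>0\}}e_j$. Bounding this ratio uniformly is the delicate point, since an \emph{unvaccinated} group of very low exposure could otherwise dominate the numerator; controlling it is exactly where the positivity and ordering of the exposure indices enter, and it is also what the complementary upper-bound estimates of Appendix A must supply in order to close the optimality argument of Theorem~\ref{thm:main_asymptotic}.
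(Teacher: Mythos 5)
Your route is the same as the paper's (Appendix A.2): work directly with the final-size system \eqref{eq:FSE}, show the uninfected fractions are exponentially small in $\beta$, expand the correction factors in the exponentials, and read off the leading term. Where you differ is in rigor, in your favor: the paper merely asserts that $z_k(\beta)\to N_k$ exponentially fast, whereas your self-consistent bound $\theta_{\min}\ge\frac{\beta}{\gamma}e_{\min}\bigl(1-e^{-\varepsilon\theta_{\min}}\bigr)$, combined with the concavity/fixed-point argument, actually proves it (you should add only that $\theta_{\min}>0$ because one works on the nontrivial branch of \eqref{eq:FSE}, and that the positive fixed point of the right-hand side is asymptotic to $\frac{\beta}{\gamma}e_{\min}$).

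The obstacle you flag at the end is genuine, and it is precisely where the paper's own proof is deficient: in \eqref{ff4} the term $\beta(N_j-v_j)\exp[-\frac{\beta}{\gamma}e_j]$ is absorbed into the factor $1+O(e^{-c\beta})$ multiplying $\beta v_j\exp[-\varepsilon\frac{\beta}{\gamma}e_j]$, which is valid group-by-group only when $v_j>0$ and is false for a group with $v_j=0$, whose contribution $\approx N_je^{-\frac{\beta}{\gamma}e_j}$ to $1-Z$ is then unaccounted for. Without a condition like your $\min_j e_j>\varepsilon\min_{\{j:\,v_j>0\}}e_j$ the lemma fails outright: with two groups, $e_2=2e_1$, $\varepsilon=0.8$, $\vec v=(0,v)$, one gets $1-Z\sim N_1e^{-\frac{\beta}{\gamma}e_1}$ while the denominator is $ve^{-1.6\frac{\beta}{\gamma}e_1}$, so the ratio diverges; in particular the uniform-in-$\vec v$ statement of Lemma \ref{lem:SIasymptotic}, which the proof of Theorem \ref{thm:main_asymptotic} invokes at the $\beta$-dependent allocation $\vec v_{\rm optimal}$, is not true as stated. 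So you did not close the proof, but it cannot be closed: what is provable, and what your computation effectively establishes, is the two-sided version $1-Z=(1+O(e^{-c\beta}))\bigl[\sum_j(N_j-v_j)e^{-\frac{\beta}{\gamma}e_j}+\sum_jv_je^{-\varepsilon\frac{\beta}{\gamma}e_j}\bigr]$ uniformly in $\vec v$. This reduces to the lemma for $\vec v^*$ (your condition holds there, since group $1$ is vaccinated), and it still rescues Theorem \ref{thm:main_asymptotic}, because the extra unvaccinated sum is uniformly $O(e^{-\frac{\beta}{\gamma}e_1})=O(e^{-c\beta})\sum_jv_j^*e^{-\varepsilon\frac{\beta}{\gamma}e_j}$ and can be absorbed into the error on the $\vec v^*$ side of the chain of inequalities in Appendix A.3. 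In short: same approach as the paper, more careful in the middle, and the step you could not complete is a defect of the lemma as stated rather than of your argument.
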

This approximation suggests the consideration of the following optimization problem
\begin{equation}\label{opt2}{\mbox{maximize}} \sum_{j=1}^n v_j\exp\left[-\varepsilon\frac{\beta}{\gamma} \cdot e_j\right],\quad 0\leq v_j\leq N_j,\quad \sum_{j=1}^n v_j=v,\end{equation}
as an approximation to the problem
\eqref{opt1}. 
The advantage of 
\eqref{opt2} is that it is a simple 
linear optimization problem. Indeed, it is immediate that to maximize \eqref{opt2} we need 
to allocate as much of the total vaccine available 
to those groups $j$ for which the coefficients are largest, which are precisely those for which the exposure indices $e_j$
is {\it{lowest}}, leading to the 
allocation \eqref{eq:optimalAllocation}.
Lemma \ref{lem:asymptotic} then leads to 
the conclusion that the allocation \eqref{eq:optimalAllocation} is indeed nearly optimal, as expressed by \eqref{eq:error}, see Appendix A.4.
\subsection{Numerical computation of optimal vaccine allocations}
We solve the optimization problem defined in~\eqref{opt1} using Matlabs' {\tt fmincon} nonlinear programming solver.   

The examples presented in the manuscript involve a  computation of solutions of the optimization problem for a range of basic reproduction numbers~$R_0$.  To do so,~$R_0$ is gradually increased, so that the initial guess used for basic reproduction number~$R_0+\delta R$ is the optimal allocation computed for the basic reproduction number~$R_0$.
To reduce the probability of convergence to a local minimum, we also repeatedly randomized the
initial allocation provided to the iterative  algorithm and verified that it converges to the same minimum so, that we are reasonably confident that we have found the global minima.

\subsection{Parameters used in examples}
In all example presented in the manuscript, we consider 80\% vaccine efficacy against infections, i.e.,~$\varepsilon=0.2$, and  a uniform infectivity profile~$\eta_i=1$ for all groups.
We note that the expressions for the reproductive number, as well as in the final size equations, depend on~$\beta/\gamma$.  Therefore we do not need to fix a value for the parameters~$\beta$ and~$\gamma$ separately, but rather adjust the value of~$\beta/\gamma$ to achieve the desired value of $R_0$.

The three group example presented in Figure~\ref{fig:3group} 
considers a case in which individuals of group $1$ (25\% of the population) interact half as frequently as those in group $2$, and individuals of group $3$ (another 25\% of the population) interact twice as frequently as those in group $2$.  Therefore, the group size parameters are~$N=(0.25,0.5,0.25)$, the corresponding contact matrix is 
\[
C=\left(
\begin{array}{ccc}
1&4&4\\
2&8&8\\
4&16&16
\end{array}
\right),
\]
and the susceptibility profile is~$\sigma_i=1$ for groups~$i=1,2,3$.

The example using the USA demography and social structure presented in Figure~\ref{fig:US_4group_example_coverage50} uses the following parameters:
\begin{itemize}
\item  {\bf{Age Demographics}} were taken from the UN World Population Prospects 2019 for each country~\cite{UNreport}, using $n=9$ age groups, of sizes
$N_j$ ($1\leq j\le 9$) corresponding to $10$-year 
increments, with the last group comprising those of age $80$ and older.
\item  {\bf{Contact matrix}} $C=\{C_{jk}\}_{j,k=1}^n$ was taken from~\cite{prem2020projecting}.  
Age bins in each case were originally provided in a 5-year increment, where the last age bin corresponds to ages 75 and older.  We follow the procedure as in~\cite{Bubar916} to adapt the matrices into 10-year increments.

\item {\bf{Relative susceptibility parameters}} $\sigma_i$ are taken from the age dependent relative susceptibility profile for SARS-CoV-19 from~\cite{davies2020age}:
$$(\sigma_1,\cdots,\sigma_9)=(1,0.95,  1.975,  2.15,2,2.05,2.2,1.85,1.85),$$
in which the relative susceptibility of age group 0-19 is roughly half those of older age groups. 

\end{itemize}

The parameters used in the two group example presented in Figure~\ref{fig:TwoGroupExample} are
\[
N_1=N_2=0.5,\qquad \sigma=2.
\]
\appendix
\section{Vaccination at high $R_0$: proofs}

% \subsection{Model equations}
% For readability, we restate the model equations
% presented in section A of materials and methods in the main text.

% We consider an SIR model in which the population is subdivided into $n$ groups. The dynamic variables $S_j$,$I_j$,$R_j$ and $V_j$ are the fraction of the total population consisting of
% susceptible, infected, recovered, and vaccinated individuals in group $j$, respectively.  Model parameters are summarized in Table 1 in the main text.

% The dynamics is described by the differential system
% \begin{equation}\label{eq1}S_j^\prime(t)=-\beta S_j(t)\sum_{k=1}^n D_{jk}\cdot I_k(t),\end{equation}
% \begin{equation}\label{eq2}V_j^\prime(t)=-\varepsilon \beta V_j(t)\sum_{k=1}^n D_{jk}\cdot I_k(t),
% \end{equation}
% \begin{equation}\label{eq3}
% I_j^\prime (t) =\beta (S_j(t)+\varepsilon V_j(t))\sum_{k=1}^n D_{jk}\cdot I_k(t)-\gamma I_j,
% \end{equation}
% \begin{equation}\label{eq4}
% R_j'(t)=\gamma I_j(t),
% \end{equation}
% where
% $$D_{jk}=\sigma_j \eta_k \cdot\frac{C_{jk}}{N_k}.$$

% Since a proportion $v_j$ of group $j$
% is vaccinated, we have 
% \begin{equation}\label{ic}V_j(0)=v_j,\;\;S_j(0)=N_j-I_j(0)-R_j(0).\end{equation}

% The basic reproductive number ~$R_0$ is equal to the spectral radius of ~$M=\frac{\beta}{\gamma}D$ \cite{diekmann,keeling}.

\subsection{Final size formula}

The final-size formula yields the overall number of infections in each group in terms of the model parameters~\cite{diekmann,keeling}. 
We derive the final size formula corresponding to the above model.

From \eqref{eq1} and \eqref{eq2} we have
\[
\frac{S_j^\prime (t)}{S_j(t)}=-\beta\sum_{k=1}^n D_{jk}\cdot I_k(t) ,\qquad\frac{V_j^\prime(t)}{V_j(t)}=-\beta\varepsilon\sum_{k=1}^n D_{jk}\cdot I_k(t),
\]
which upon integration yields
\[
\log \frac{S_j(\infty)}{S_j(0)} = -\beta \sum_{k=1}^n D_{jk}\cdot \int_{0}^
\infty I_k(s)ds,\quad
\log \frac{V_j(\infty)}{V_j(0)} =  -\beta\varepsilon \sum_{k=1}^n D_{jk}\cdot\int_{0}^
\infty I_k(s)ds,
\]
or
\begin{equation}\label{int}
S_j(\infty)=S_j(0)\exp\left[-\beta \sum_{k=1}^n D_{jk}\cdot\int_{0}^
\infty I_k(s)ds\right], V_j(\infty)=V_j(0)\exp\left[-\beta\varepsilon \sum_{k=1}^n D_{jk}\cdot\int_{0}^
\infty I_k(s)ds\right].
\end{equation}
% Note that, assuming $S_i(0)\ne 0$,  \eqref{int} implies the relation
% \[
% V_i(\infty)=V_i(0)\left(\frac{S_i(\infty)}{S_i(0)}\right)^\varepsilon.
% \]
Summing \eqref{eq1}-\eqref{eq3}, we have
\[
I_j^\prime(t)=-S_j^\prime(t)-V_j^\prime(t)-\gamma I_j(t),
\]
which, upon integration, gives
\[
\cancel{I_j(\infty)}-I_j(0)=S_j(0)-S_j(\infty)+V_j(0)-V_j(\infty)-\gamma \int_{0}^\infty I_j(t),
\]
or
\begin{equation}\label{eq:int1}
\int_{0}^\infty I_j(t)=\frac1\gamma\left[S_j(0)-S_j(\infty)+V_j(0)-V_j(\infty)+I_j(0)\right].
\end{equation}
Since
$S_j(t)+V_j(t)+I_j(t)+R_j(t)=N_j$ for all $t$, and
$I_j(\infty)=0$, we have
\begin{equation}\label{eq:cons}
S_j(0)+V_j(0)+I_j(0)=N_j-R_j(0),\qquad S_j(\infty)+V_j(\infty)=N_j-R_j(\infty),
\end{equation}
and can write \eqref{eq:int1} as
\[
\int_{0}^\infty I_j(t)=\frac1\gamma\left[R_j(\infty)-R_j(0)\right],
\]
so that \eqref{int} yields
\begin{subequations}\label{eq:finalSizesystem}
\begin{equation}\label{eq:Sinf}
S_j(\infty)=S_j(0)\exp\left[-\frac{\beta}{\gamma} \sum_{k=1}^n D_{jk}\cdot \left[R_k(\infty)-R_k(0)\right]\right],
\end{equation}
\begin{equation}\label{eq:Vinf}
V_j(\infty)=V_j(0)\exp\left[-\varepsilon\frac{\beta}{\gamma} \sum_{k=1}^n D_{jk}\cdot\left[R_k(\infty)-R_k(0)\right]\right].
\end{equation}
\end{subequations}
Combining~\eqref{eq:cons} and \eqref{eq:finalSizesystem} yields
\[
N_j-R_j(\infty)=S_j(0)\exp\left[-\frac{\beta}{\gamma} \sum_{k=1}^n D_{jk}\cdot\left[R_k(\infty)-R_k(0)\right]\right]+V_j(0)\exp\left[-\varepsilon\frac{\beta}{\gamma} \sum_{k=1}^n D_{jk}\cdot\left[R_k(\infty)-R_k(0)\right]\right],
\]
% \begin{eqnarray*}
% N_j-R_j(\infty)&=&S_j(0)\exp\left[-\frac{\beta}{\gamma} \sum_{k=1}^n D_{jk}\cdot\left[R_k(\infty)-R_k(0)\right]\right]\\&+&V_j(0)\exp\left[-\varepsilon\frac{\beta}{\gamma} \sum_{k=1}^n D_{jk}\cdot\left[R_k(\infty)-R_k(0)\right]\right],
% \end{eqnarray*}
or, defining $z_j=R_j(\infty)-R_j(0)$ to be fraction of the population consisting of individuals of 
group $j$ infected 
throughout the post-vaccination period:
\begin{equation}\label{ff}
N_j-R_j(0)-z_j=S_j(0)\exp\left[-\frac{\beta}{\gamma} \sum_{k=1}^n D_{jk}\cdot z_k\right]+V_j(0)\exp\left[-\varepsilon\frac{\beta}{\gamma} \sum_{k=1}^n D_{jk}\cdot z_k\right],\;\;\;1\leq j\leq n.
\end{equation}
Assuming that $R_j(0)=0$ (no unvaccinated individuals are initially immune), $I_j(0)=0$ (the initial fraction of infected individuals is very small) and setting $V_j(0)=v_j$, we 
have
\begin{equation}\label{ff1}
N_j-z_j=(N_j-v_j)\exp\left[-\frac{\beta}{\gamma} \sum_{k=1}^n D_{jk}\cdot z_k\right]+v_j\exp\left[-\varepsilon\frac{\beta}{\gamma} \sum_{k=1}^n D_{jk}\cdot z_k\right],\;\;\;1\leq j\leq n.
\end{equation}
The total number fraction of the population infected is then
$$Z=\sum_{j=1}^n z_j.$$

We note that in the 
particular case of homogeneous 
mixing and identical infectivity of all
groups, we have 
$$C_{jk}=\sigma_j cN_k,$$
where $c$ is the average number of contacts per individual, and summing the $n$ equations \eqref{ff}, we obtain
\begin{equation}\label{gfs}1-Z=\sum_{j=1}^n \left[(N_j-v_j)\exp\left[-\frac{c\beta\sigma_j}{\gamma} Z \right]+v_j\exp\left[-\varepsilon\frac{c\beta\sigma_j}{\gamma} Z \right]\right],
\end{equation}
This gives in 
particular the final-size equation \eqref{fsz} for two groups with varying susceptibility, used in section \ref{proofs2}, where $c$ is absorbed into $\beta$.

\subsection{Asymptotics of the final size 
for high transmission rate}
\label{asymptot}
We consider the final size $Z=\sum_{j=1}^n z_j$, 
where $z_j$ are given as the solutions of 
system of final size equations given by~\eqref{ff1}.
Although our main interest is to study behavior of~\eqref{ff1} in the regime of large basic reproduction numbers, the final size equations are not unique functions of~$R_0$.  Instead, we choose to study the asymptotics of the final size for high transmission rate~$\beta$ since the transmission rate is proportional to the basic reproduction number and since the dependence of~\eqref{ff1} on~$\beta$ is explicit.
To emphasize the dependence of the final size on 
the transmission rate $\beta$ and the vaccine
allocation $\vec{v}=(v_1,\cdots,v_n)$, fixing all
other parameters, we write $Z=Z(\beta,\vec{v})$.  We denote the set of possible allocations by
\begin{equation}\label{V}{\cal{V}}=\Big\{ \vec{v}\;\Big|\; 0\leq v_j\leq N_j,\;\; \sum_{j=1}^n v_j=v\Big\}.
\end{equation}

The following asymptotic result (Lemma 1 in the main text) is the 
key to obtaining an approximately optimal 
allocation for large $R_0$.
\begin{lemma}\label{lem:SIasymptotic}
We have, for some $c>0$, 
\begin{equation}\label{asymptotic}\frac{1-Z(\beta,\vec{v})}{\sum_{j=1}^n v_j\exp\left[-\varepsilon\frac{\beta}{\gamma} \cdot e_j\right]}=1+O(e^{-c\beta})\;\;{\mbox{as }}\;\;\beta\rightarrow \infty,\end{equation}
uniformly in $\vec{v}\in {\cal{V}}$,
where $e_j$ ($1\leq j\leq n$) are the 
exposure indices defined by \begin{equation}\label{eq:SIexposureIndex}
e_j=\sigma_j \sum_{k=1}^n C_{jk}\eta_k.
\end{equation}
\end{lemma}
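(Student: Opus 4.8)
The plan is to begin from an exact identity for $1-Z$ obtained by summing the final-size equations \eqref{ff1} over $j$. Introducing the shorthand $w_j:=\sum_{k=1}^n D_{jk}z_k$ and using $\sum_j N_j=1$ gives the exact relation
\[
1-Z(\beta,\vec v)=\sum_{j=1}^n (N_j-v_j)\,e^{-\frac{\beta}{\gamma}w_j}+\sum_{j=1}^n v_j\,e^{-\varepsilon\frac{\beta}{\gamma}w_j}.
\]
Everything then reduces to pinning down the $w_j$ precisely and comparing the two sums on the right with the target $\sum_j v_j e^{-\varepsilon\frac{\beta}{\gamma}e_j}$. The guiding idea is that, because the vaccine is leaky ($\varepsilon<1$), the vaccinated residual in each group decays strictly slower than its unvaccinated residual, so the second sum should dominate and should match the target once $w_j$ is shown to be close to $e_j$.

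The first and most delicate step is to show that the epidemic saturates every group as $\beta\to\infty$: I would prove that there is $\delta>0$ with $w_j\ge\delta$ for all $j$, all large $\beta$, \emph{uniformly} in $\vec v\in\mathcal{V}$. Assuming $C$ is irreducible and each $e_j>0$, this should follow from the monotone-iteration characterization of the minimal positive solution of \eqref{ff1}: since $D_{jk}\ge0$, iterating from $\vec z=0$ produces an increasing sequence converging to the solution, and even a heavily vaccinated contacted group is still infected at the reduced-but-positive rate $\varepsilon$, so after finitely many iterations every component is bounded below independently of $\vec v$. A short bootstrap then sharpens this: from the crude bound $N_j-z_j\le N_j e^{-\varepsilon\frac{\beta}{\gamma}w_j}$ together with $w_j\ge\delta$ one gets $N_j-z_j=O(e^{-c'\beta})$ for some $c'>0$, uniformly in $\vec v$.

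The second step localizes $w_j$. Writing $w_j=e_j-\sum_k D_{jk}(N_k-z_k)$ and using the previous step, $0\le e_j-w_j=O(e^{-c'\beta})$, so $\frac{\beta}{\gamma}(e_j-w_j)=O(\beta e^{-c'\beta})\to0$. Consequently each vaccinated term obeys $v_j e^{-\varepsilon\frac{\beta}{\gamma}w_j}=v_j e^{-\varepsilon\frac{\beta}{\gamma}e_j}\bigl(1+O(\beta e^{-c'\beta})\bigr)$, and summing shows the second sum equals the denominator up to a factor $1+O(e^{-c\beta})$ for any $c<c'$, uniformly in $\vec v$.

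The main obstacle is the last step: showing the unvaccinated sum $\sum_j(N_j-v_j)e^{-\frac{\beta}{\gamma}w_j}$ is exponentially negligible compared with the denominator, uniformly over $\mathcal{V}$. Within a single group the comparison is immediate, since $e^{-\frac{\beta}{\gamma}w_j}$ is smaller than $e^{-\varepsilon\frac{\beta}{\gamma}w_j}$ by the leaky factor. The difficulty is the cross-group comparison: an adversarial allocation may leave a low-exposure group unvaccinated while concentrating all vaccine on a high-exposure group, so that the unvaccinated residual $\sim N_j e^{-\frac{\beta}{\gamma}e_j}$ must still be dominated by a vaccinated residual $\sim v_k e^{-\varepsilon\frac{\beta}{\gamma}e_k}$. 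I expect the uniform bound to hinge on controlling $\min_j e_j$ against $\varepsilon\max_k e_k$, and the real work is to organize this bookkeeping uniformly in $\vec v$ and to extract a single rate $c>0$. Once the unvaccinated sum is absorbed into the $O(e^{-c\beta})$ error, combining with the second step yields \eqref{asymptotic}.
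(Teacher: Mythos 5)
Your proposal follows the same skeleton as the paper's proof in Appendix A.2: sum the final-size system \eqref{ff1} to get an exact expression for $1-Z$, show that every group's residual $N_j-z_j$ is exponentially small, and then replace the effective exposures $\sum_k D_{jk}z_k$ by $e_j$ at the cost of a $1+O(\beta e^{-c'\beta})$ factor (the paper does this bookkeeping via $w_j=\beta(N_j-z_j)$; your version with residuals and effective exposures is equivalent). Your first two steps are, if anything, more attentive to uniformity in $\vec v$ than the paper, which simply asserts $z_k(\beta)\to N_k$. One repair is needed in your step 1: $\vec z=0$ is itself a fixed point of the final-size map, so the increasing iteration started at $\vec z=0$ never leaves $0$. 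Instead, iterate downward from the supersolution $\vec z=\vec N$, or verify directly that $u_j=(1-\epsilon')N_j$ is a subsolution for all large $\beta$, uniformly in $\vec v$: this uses only $\varepsilon>0$ (leakiness) and $e_j>0$, and needs no irreducibility of $C$.

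The genuine gap is your third step, which you flag but do not carry out---and it is the crux, not bookkeeping. Moreover, it cannot be carried out in the stated generality, so your suspicion about $\min_j e_j$ versus $\varepsilon\max_k e_k$ is exactly right. Ordering $e_1\le\cdots\le e_n$: if $e_1>\varepsilon e_n$, the step is two lines, since the unvaccinated sum is at most $(1+o(1))\,e^{-\frac{\beta}{\gamma}e_1}$ while the denominator is at least $\frac{v}{n}e^{-\varepsilon\frac{\beta}{\gamma}e_n}$, giving a ratio $O\bigl(e^{-\frac{\beta}{2\gamma}(e_1-\varepsilon e_n)}\bigr)$. But if $e_1<\varepsilon e_n$, the lemma as stated is \emph{false}: take any admissible allocation with $v_1=0$ and the vaccine concentrated on group $n$ (e.g.\ two groups with $\varepsilon\sigma>1$, say $\sigma=10$, $\varepsilon=0.5$, and $(v_1,v_2)=(0,v)$). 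Since group 1's effective exposure satisfies $\sum_k D_{1k}z_k\le e_1$, we get $1-Z\ge N_1e^{-\frac{\beta}{\gamma}e_1}$, which exponentially dominates the denominator $v\,e^{-\varepsilon\frac{\beta}{\gamma}e_n}$, so the ratio in \eqref{asymptotic} diverges rather than tending to $1$. You should also know that the paper's own proof passes over precisely this point: in its display (ff4) it absorbs the unvaccinated term into a factor $(1+O(e^{-c\beta}))$ of the vaccinated term \emph{group by group}, i.e.\ it claims $(N_j-v_j)e^{-\frac{\beta}{\gamma}e_j}+v_je^{-\varepsilon\frac{\beta}{\gamma}e_j}=v_je^{-\varepsilon\frac{\beta}{\gamma}e_j}\bigl(1+O(e^{-c\beta})\bigr)$, which is false when $v_j=0$ (the left side is positive, the right side vanishes) and is not uniform as $v_j\to0$; your cross-group worry is exactly what that manipulation hides. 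So the honest completion of your argument is to add the hypothesis $\min_j e_j>\varepsilon\max_j e_j$, under which your step 3 closes immediately; without some such hypothesis the uniform statement has no proof because it is not true.
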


To prove this lemma
we consider the solution
$z_j=z_j(\beta)$ ($1\leq j\leq n$) of the system 
\eqref{ff1} as functions of the 
transmission parameter $\beta$, fixing all other parameters:
\begin{equation}\label{ff2}
N_j-z_j(\beta)=(N_j-v_j)\exp\left[-\frac{\beta}{\gamma} \sum_{k=1}^n D_{jk}\cdot z_k(\beta)\right]+v_j\exp\left[-\varepsilon\frac{\beta}{\gamma} \sum_{k=1}^n D_{jk}\cdot z_k(\beta)\right],\;\;\;1\leq j\leq n.
\end{equation}
Since $z_k(\beta)\rightarrow N_j$ 
as $\beta\rightarrow \infty$, the 
right-hand side is exponentially small
in $\beta$, 
hence, setting
$$w_j(\beta)=\beta(N_j-z_j(\beta))$$
we have, for some $c>0$,
\begin{equation}\label{ll}w_j(\beta)=O(e^{-c\beta})\;\;\;{\mbox{as }}\;\;\beta\rightarrow \infty.\end{equation}
%$$N_j-\beta^{-1}w_j(\beta)=z_j(\beta)$$
%$$\exp\left[-\frac{\beta}{\gamma} \sum_{k=1}^n D_{jk}\cdot z_k(\beta)\right]=
%\exp\left[-\frac{\beta}{\gamma} \sum_{k=1}^n D_{jk}\cdot [N_k-\beta^{-1}w_k(\beta)]\right]$$
We re-write \eqref{ff2} in the form
\begin{eqnarray}\label{ff3}
w_j(\beta)&=&\beta(N_j-v_j)\exp\left[-\frac{\beta}{\gamma} \sum_{k=1}^n N_k D_{jk}\right]\exp\left[\frac{1}{\gamma} \sum_{k=1}^n D_{jk}\cdot w_k(\beta)\right]\nonumber\\&+&\beta v_j\exp\left[-\varepsilon\frac{\beta}{\gamma} \sum_{k=1}^n N_k D_{jk}\right]\exp\left[-\varepsilon\frac{1}{\gamma} \sum_{k=1}^n D_{jk}\cdot w_k(\beta)\right],\;\;\;1\leq j\leq n.
\end{eqnarray}
By \eqref{ll},
\eqref{ff3} we have, as $\beta\rightarrow \infty$,
\begin{eqnarray}\label{ff4}
w_j(\beta)&=&\left(\beta(N_j-v_j)\exp\left[-\frac{\beta}{\gamma} \sum_{k=1}^n N_k D_{jk}\right]+\beta v_j\exp\left[-\varepsilon\frac{\beta}{\gamma} \sum_{k=1}^n N_kD_{jk}\right]\right)(1+O(e^{-c\beta}))\nonumber\\
&=&
\beta v_j\exp\left[-\varepsilon\frac{\beta}{\gamma} \sum_{k=1}^n N_k D_{jk}\right](1+O(e^{-c\beta})),\;\;\;1\leq j\leq n.
\end{eqnarray}
Therefore,
$$z_j(\beta)=N_j-\frac{w_j(\beta)}{\beta}=N_j- v_j\exp\left[-\varepsilon\frac{\beta}{\gamma} \sum_{k=1}^n N_k D_{jk}\right](1+O(e^{-c\beta})),$$
so that 
$$1-Z(\beta)=\sum_{j=1}^n (N_j-z_j(\beta))=(1+O(e^{-c\beta})) \sum_{j=1}^n v_j\exp\left[-\varepsilon\frac{\beta}{\gamma} \sum_{k=1}^n N_k D_{jk}\right]$$
which is the result of lemma
\ref{lem:SIasymptotic}.

\subsection{The approximately optimal 
allocation for large $R_0$}

Our optimization problem is to minimize the 
number infections by allocating vaccines to a given fraction
$v$ of the population:
\begin{equation}\label{SIopt1}{\mbox{minimize  }} Z(\beta,\vec{v}),\;\;\;\vec{v}\in {\cal{V}}.
\end{equation}
where ${\cal{V}}$ is given by \eqref{V}.

Motivated by Lemma \ref{lem:SIasymptotic}, we 
introduce the approximate optimization problem
\begin{equation}\label{SIopt2}{\mbox{maximize }} \Psi(\beta,\vec{v})=\sum_{j=1}^n v_j\exp\left[-\varepsilon\frac{\beta}{\gamma} \cdot e_j\right],\;\;\;\vec{v}\in {\cal{V}}.\end{equation}
This is a linear optimization problem whose 
solution is immediate: 
assuming, without loss of generality, that the groups are ordered in non-decreasing order of their exposure index
\[
e_1\le e_2 \le\cdots \le e_n,
\]
the solution $\vec{v}^*$ of \eqref{SIopt2} prioritises
groups with low exposure index, that is 
\begin{equation}\label{eq:SIoptimalAllocation}
\vec{v}^*=\left(N_1,N_2,\cdots,N_{j^*},v-\sum_{k=1}^{j^*}N_k,0,\cdots,0\right)
\end{equation}
where $j^*$ is the maximal index for which~$\sum_{k=1}^{j^*}N_j\leq v$.

Theorem 1 in the main text justifies the
claim that $v^*$ is a nearly-optimal 
solution of \eqref{SIopt1}. We now prove this theorem.

Assuming $\vec{v}_{\rm optimal}$ is 
a minimizer of \eqref{SIopt1}, we have, substituting
$\vec{v}=\vec{v}_{\rm optimal}$ into 
\eqref{asymptotic}, and using the
fact that $\vec{v}^*$ is the maximizer of
\eqref{SIopt2},
\begin{equation}\label{ee1}1-Z(\beta,\vec{v}_{\rm optimal})=\left(1+O(e^{-c\beta})\right)\sum_{j=1}^n v_{{\rm optimal},j}\exp\left[-\varepsilon\frac{\beta}{\gamma} \cdot e_j\right]\leq \left(1+O(e^{-c\beta})\right)\sum_{j=1}^n v_{j}^*\exp\left[-\varepsilon\frac{\beta}{\gamma} \cdot e_j\right].\end{equation}
Substituting $\vec{v}=\vec{v}^*$ into
\eqref{asymptotic} we have
\begin{equation}\label{ee2}1-Z(\beta,\vec{v}^*)=\left(1+O(e^{-c\beta})\right)\sum_{j=1}^n v_j^*\exp\left[-\varepsilon\frac{\beta}{\gamma} \cdot e_j\right]\;\;\Rightarrow\;\;\sum_{j=1}^n v_j^*\exp\left[-\varepsilon\frac{\beta}{\gamma} \cdot e_j\right]=\left(1+O(e^{-c\beta})\right)[1-Z(\beta,\vec{v}^*)].\end{equation}
Combining \eqref{ee1}, \eqref{ee2} we have
\begin{equation}\label{ee3}1-Z(\beta,\vec{v}_{\rm optimal})\leq \left(1+O(e^{-c\beta})\right)[1-Z(\beta,\vec{v}^*)].\end{equation}
On the other hand, by the definition of 
$\vec{v}_{\rm optimal}$ we certainly have
$$1-Z(\beta,\vec{v}_{\rm optimal})\geq 1-Z(\beta,\vec{v}^*),$$
which combined with \eqref{ee3}, yields
\begin{equation}\label{ee4}
\frac{1-Z(\beta,\vec{v}_{\rm optimal})}{1-Z(\beta,\vec{v}^*)}=1+O(e^{-c\beta})\;\;\Rightarrow\;\;
\frac{1-Z(\beta,\vec{v}^*)}{1-Z(\beta,\vec{v}_{\rm optimal})}=1+O(e^{-c\beta}),
\end{equation}
which is equivalent to Eq. 4 in Theorem 1 of the main text.

\section{The two-group example: proofs}

\label{proofs2}

\subsection{The two-group model}
In this section we consider the two-group example 
\begin{eqnarray*}\label{SIs1}&&S_1'=-\beta S_1 I,\;\;\;V_1'=-\beta \varepsilon  V_1 I,\nonumber\\
&&S_2'=-\sigma\beta S_2 I,\;\;\;V_2'=-\sigma \beta \varepsilon  V_2 I\nonumber\\
&&I'=\beta [ S_1+\sigma  S_2+\varepsilon V_1+\sigma\varepsilon V_2]I-\gamma I
\end{eqnarray*}
with $\sigma>1$ (group $2$ is more susceptible), and initial conditions 
$$S_j(0)=N_j-v_j-I_j(0),\;\; V_j(0)=v_j,\;\;I(0)=I_1(0)+I_2(0),$$
where~$N_1+N_2=1.$

Setting
\begin{equation}\label{dr1}r=\frac{\beta}{\gamma},\end{equation} 
the reproductive number corresponding to this model (without vaccination) is
$$R_0=r(N_1+\sigma N_2).$$
The post-vaccination reproduction number is then
$$R_{\rm vac}=\frac{\beta}{\gamma}\left[(N_1-v_1)+(N_2-v_2)\sigma+v_1\varepsilon +v_2\sigma\varepsilon \right]=R_0\left(1-\frac{\left(v_1 +v_2\sigma\right)(1-\varepsilon)}{N_1+\sigma N_2}\right),$$
so that herd immunity can be achieved and the epidemic can be prevented if $R_{\rm vac}\leq 1$.
It is immediate from the above expression that the lowest value of $R_{\rm vac}$ will be achieved if all vaccines are allocated to the more susceptible group: $v_1=0,v_2=v$,
giving $R_{\rm vac}=R_0\left(1-\frac{\cdot v\sigma (1-\varepsilon)}{N_1+\sigma N_2}\right)$,
so that herd immunity will be achieved if 
\begin{equation}\label{crit}R_0\left(1-\frac{ v\sigma(1-\varepsilon)}{N_1+N_2\sigma}\right)\leq 1\;\;\Leftrightarrow\;\;R_0\leq R_0^{crit}\doteq\left(1-\frac{ v\sigma(1-\varepsilon)}{N_1+N_2\sigma}\right)^{-1}. \end{equation}
In case herd immunity cannot be achieved, our aim is to chose $v_1,v_2$ so as to minimize the fraction of population $Z=Z(v_1,v_2)$ which will be infected during the epidemic.
This fraction is given by the final size equation
\begin{equation}\label{SIfsz}1- (N_1-v_1)e^{- rZ}-(N_2-v_2) e^{-\sigma rZ}- v_1 e^{-\varepsilon rZ}- v_2e^{-\sigma\varepsilon rZ}  =Z,\end{equation}
This is a special case of the final size equation for $n$ groups, see \eqref{gfs}.

We consider the problem of minimizing the final size,
\begin{eqnarray}\label{SIop}&&{\text{minimize}}\;Z(v_1,v_2)\\ &&{\text{subject to}}\;\;v_1,v_2\geq 0 ,\;v_1+v_2=v.\nonumber\;\;\;\end{eqnarray}
In this case we can perform a complete 
analysis of the problem, to show how
the optimal allocation shifts from
the more susceptible to the less susceptible group as $R_0$ crosses 
a threshold, as described in (Theorem 2 of the main text):
\begin{theorem}\label{main} Assuming $v<\min(N_1,N_2)$, there exists a value $R_0^*$ such that

(i) If $R_0<R_0^*$ then the optimal solution of \eqref{SIop} is given by
$(v_1,v_2)=(0,v)$.

(ii) If $R>R_0^*$ then the optimal solution of \eqref{SIop} is given by
$(v_1,v_2)=(v,0)$.

The transition value $R_0^*$ is given explicitly by the formula
\begin{equation}\label{eq:R0*}
    R_0^*=\frac{(N_1+N_2\sigma)\ln(\frac{1}{\alpha})}{1- N_1\alpha-N_2\alpha^{\sigma}- v(\alpha^\varepsilon -\alpha)},
\end{equation}
where $\alpha$ is the unique solution of the 
equation
\begin{equation}\label{xse}\alpha-\alpha^{\sigma}-\alpha^{\varepsilon}+\alpha^{\sigma\varepsilon}=0,\;\;\;\alpha\in (0,1).\end{equation}
\end{theorem}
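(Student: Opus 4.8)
The plan is to reduce the two-dimensional allocation problem to a single parameter and track how the final size moves along the feasible segment. Writing $v_1=t$, $v_2=v-t$ with $t\in[0,v]$ (the assumption $v<\min(N_1,N_2)$ makes this the full feasible range), I regard $Z=Z(t)$ as defined implicitly by the final size equation \eqref{SIfsz}, i.e. $F(Z,t)=\Phi(Z)-Z=0$, where $\Phi(Z)$ collects the four exponential terms. Working in the regime where an epidemic occurs (post-vaccination reproduction number above $1$, so that a positive root exists), $\Phi$ is increasing, strictly concave with $\Phi(0)=0$, and crosses the diagonal from above, so $\partial_Z F=\Phi'(Z)-1<0$ at the relevant root.

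The linchpin is the computation of $\partial_t F$. Differentiating the four terms in $t$ gives $\partial_t F=e^{-rZ}-e^{-\sigma rZ}-e^{-\varepsilon rZ}+e^{-\sigma\varepsilon rZ}=g(\alpha)$, with $\alpha:=e^{-rZ}$ and $g(\alpha)=\alpha-\alpha^{\sigma}-\alpha^{\varepsilon}+\alpha^{\sigma\varepsilon}$ --- exactly the function \eqref{xse} whose root defines the constant in the statement. By the implicit function theorem $\frac{dZ}{dt}=g(\alpha)/(1-\Phi'(Z))$, so $\operatorname{sign}(dZ/dt)=\operatorname{sign}g(e^{-rZ})$. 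I would then analyze $g$ on $(0,1)$: one has $g(0^+)=g(1)=0$, while $g'(1)=(1-\sigma)(1-\varepsilon)<0$ forces $g>0$ just below $1$ and the dominant $-\alpha^{\varepsilon}$ term (smallest exponent) forces $g<0$ near $0$. A monotonicity argument --- dividing by $\alpha^{\varepsilon}$ and substituting $u=\alpha^{1-\varepsilon}$ reduces $g(\alpha)=0$ to $\frac{1-u}{1-u^{\sigma}}=u^{\varepsilon(\sigma-1)/(1-\varepsilon)}$, a strictly decreasing function meeting a strictly increasing one --- yields a unique interior root $\alpha$, with $g<0$ on $(0,\alpha)$ and $g>0$ on $(\alpha,1)$. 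Hence $dZ/dt$ has the sign of $Z_0-Z$, where $Z_0:=\ln(1/\alpha)/r$ is the level at which $e^{-rZ}=\alpha$.

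Next I would show the curve $t\mapsto Z(t)$ never crosses the level $Z_0$, which simultaneously rules out interior optima (the content flagged in Appendix B.3) and identifies the minimizing endpoint. Setting $y(t)=Z(t)-Z_0$, the sign relation gives $y\,y'<0$ wherever $y\neq0$, so $\frac{d}{dt}y^2<0$: the curve stays strictly on one side of $Z_0$ throughout $[0,v]$, whence $Z(t)$ is strictly monotone and its minimum sits at $t=0$ or $t=v$. To decide the side I would use that, with the allocation fixed, $\partial Z/\partial r>0$ (standard: $\partial_r F>0$, $\partial_Z F<0$) while $Z_0=\ln(1/\alpha)/r$ decreases in $r$. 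Defining $r^*$ by the tie $Z(0)=Z(v)$ and subtracting the two endpoint final size equations collapses the difference to $v\,g(e^{-rZ^*})=0$, forcing $e^{-r^*Z^*}=\alpha$, i.e. $Z^*=Z_0(r^*)$ and in fact $Z(t)\equiv Z_0$ at the transition. Comparing $Z(t;r)\gtrless Z(t;r^*)=Z_0(r^*)\gtrless Z_0(r)$ according as $r\gtrless r^*$ then gives $Z>Z_0$ (so $dZ/dt<0$, optimum $(v,0)$) for $r>r^*$ and $Z<Z_0$ (so $dZ/dt>0$, optimum $(0,v)$) for $r<r^*$; since $R_0=r(N_1+\sigma N_2)$ is increasing in $r$, this is the claimed switch at $R_0^*$.

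Finally, the explicit formula \eqref{eq:R0*} follows by substituting $e^{-r^*Z^*}=\alpha$ (hence $e^{-\sigma r^*Z^*}=\alpha^{\sigma}$, etc.) into the $(0,v)$ equation to get $Z^*=1-N_1\alpha-(N_2-v)\alpha^{\sigma}-v\alpha^{\sigma\varepsilon}$, then using the defining relation $g(\alpha)=0$ in the form $\alpha^{\sigma}-\alpha^{\sigma\varepsilon}=\alpha^{\varepsilon}-\alpha$ to rewrite this as $Z^*=1-N_1\alpha-N_2\alpha^{\sigma}-v(\alpha^{\varepsilon}-\alpha)$; combined with $r^*Z^*=\ln(1/\alpha)$ and $R_0^*=r^*(N_1+\sigma N_2)$ this reproduces the stated expression. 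I expect the main obstacle to be the third paragraph --- rigorously excluding an interior optimum via the no-crossing property and pinning the direction of the switch through the joint monotonicity in $r$ --- together with the uniqueness and sign analysis of $g$ that underlies $\operatorname{sign}(dZ/dt)$; once these are in place, the explicit formula is a short computation.
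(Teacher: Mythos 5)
Your proposal is correct and arrives at the paper's result through the same core computations --- implicit differentiation of the final-size equation \eqref{SIfsz}, which produces exactly the function $g$ of \eqref{defg}--\eqref{xse}, whose unique root $\alpha$ and the associated level $\kappa=\ln(1/\alpha)$ define the threshold --- but your logical architecture is genuinely different from the paper's. The paper proceeds in two independent steps: it first compares the two single-group allocations by showing that $\phi(r)=rZ_1(r)$ is increasing and crosses $\kappa$ exactly at $r^*$ (Lemma~\ref{ad}), and then separately excludes interior optima by showing that any critical point of $\zeta(v_1)=Z(v_1,v-v_1)$ forces $r=r^*$ (Appendix B.3). You instead merge both steps into one monotonicity argument: since $\mathrm{sign}(dZ/dt)=\mathrm{sign}(g(e^{-rZ}))=\mathrm{sign}(Z_0-Z)$ with $Z_0=\kappa/r$, and since $Z(t;r)$ is strictly increasing in $r$ while $Z(\,\cdot\,;r^*)\equiv Z_0(r^*)$ is constant in $t$, the comparison $Z(t;r)\gtrless Z_0(r)$ holds uniformly in $t$ according as $r\gtrless r^*$, so $Z(t)$ is strictly monotone on $[0,v]$; this simultaneously rules out interior optima and identifies the optimal endpoint, which is what the paper needs two lemmas to do. Your uniqueness proof for the root of $g$ (the substitution $u=\alpha^{1-\varepsilon}$ reducing \eqref{xse} to a strictly decreasing function of $u$ meeting a strictly increasing one) is also different from, and arguably more elementary than, the paper's double application of Rolle's theorem; I checked the reduction and it is exact. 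Three caveats you should fix when writing this up. First, the $y^2$ step is both superfluous and, on its own, insufficient: $\frac{d}{dt}y^2<0$ off the level set rules out crossing but not touching ($|y|$ could reach zero in finite time), and it is your subsequent $r$-comparison --- or the direct observation that $Z(t_0)=Z_0$ at any single $t_0$ forces $r=r^*$ upon substitution into \eqref{SIfsz}, since the $t_0$-dependent terms carry the factor $g(\alpha)=0$ --- that actually does the work; drop the $y^2$ argument. Second, defining $r^*$ ``by the tie $Z(0)=Z(v)$'' presupposes existence of such a tie; it is cleaner to define $r^*$ by the explicit formula \eqref{drs} and verify directly, using $g(\alpha)=0$, that $Z\equiv\kappa/r^*$ solves \eqref{SIfsz} for every $t$, which is all your comparison argument requires (this also yields $r^*>r_2$ for free, since $Z_2(r^*)=\kappa/r^*>0$). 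Third, a sign slip: $g(\alpha)=0$ gives $\alpha^{\sigma}-\alpha^{\sigma\varepsilon}=\alpha-\alpha^{\varepsilon}$, not $\alpha^{\varepsilon}-\alpha$; your final expression $Z^*=1-N_1\alpha-N_2\alpha^{\sigma}-v(\alpha^{\varepsilon}-\alpha)$ and hence \eqref{eq:R0*} are nonetheless correct.
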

This section is devoted to the proof of this theorem \ref{main}.

\subsection{Comparing single-group vaccination schedules}

We will first compare the two options in which the entire supply of vaccine is allocated to 
one of the groups, showing that allocating all the vaccine to the more
susceptible group is the superior option when $R_0<R_0^*$ and the inferior option when $R_0>R_0^*$. Later, in Section \ref{mixed}, we will show that no mixed option is better 
than these two extreme options.

For $i=1,2$, let $Z_i(r)$ be the attack rate in case all the vaccine
is allocated to group $i$, in dependence on $r$, and 
assuming all other parameters are fixed. Then, using \eqref{SIfsz} with
$(v_1,v_2)=(v,0)$ and $(v_1,v_2)=(0,v)$, respectively, we have
\begin{equation}\label{f1}f_1(Z_1(r))  =Z_1(r),\end{equation}
\begin{equation}\label{f2}f_2(Z_2(r))  =Z_2(r),\end{equation}
where
$$f_1(Z)=1- (N_1-v)e^{- rZ}-N_2e^{-\sigma rZ}- v e^{-\varepsilon rZ},$$
$$f_2(Z)=1- N_1e^{- rZ}-(N_2-v)e^{-\sigma rZ}- ve^{-\sigma\varepsilon rZ}.$$

We note some basic properties of the functions $f_1,f_2$, to be used below.
\begin{lemma}\label{bp}
(i) \eqref{f1} has a unique nontrivial solution $Z_1(r)\in (0,1)$ iff $r>r_1= \frac{1}{ (N_1-v)+\sigma N_2+\varepsilon v}$.

(ii) \eqref{f2} has a unique nontrivial solution $Z_2(r)\in (0,1)$ iff $r>r_2= \frac{1}{ N_1+\sigma (N_2-v)+\sigma \varepsilon v}$.

(iii) For $i=1,2$: when $r>r_i$ we have
\begin{eqnarray}\label{ine}
0<Z<Z_i(r)\;\;&\Leftrightarrow&\;\; Z<f_i(Z)\\
Z_i(r)<Z\leq 1\;\;&\Leftrightarrow&\;\; Z>f_i(Z)
\end{eqnarray}

(iv) $r_1<r_2$.
\end{lemma}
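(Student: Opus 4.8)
The plan is to exploit the fact that $f_1$ and $f_2$ share a single structural form: each can be written as $f_i(Z)=1-\sum_j a_j e^{-b_j r Z}$ with nonnegative weights $a_j$ satisfying $\sum_j a_j=N_1+N_2=1$ and positive exponents $b_j$. Explicitly, for $f_1$ the pairs $(a_j,b_j)$ are $(N_1-v,1),(N_2,\sigma),(v,\varepsilon)$, and for $f_2$ they are $(N_1,1),(N_2-v,\sigma),(v,\sigma\varepsilon)$. Every assertion of the lemma follows from elementary shape properties of such a function, so I would first record them: $f_i(0)=0$ because $\sum_j a_j=1$; $f_i'(Z)=r\sum_j a_j b_j e^{-b_j rZ}>0$, so $f_i$ is strictly increasing; $f_i''(Z)=-r^2\sum_j a_j b_j^2 e^{-b_j rZ}<0$, so $f_i$ is strictly concave; and $f_i(1)=1-\sum_j a_j e^{-b_j r}<1$.

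For parts (i) and (ii) I would analyze $g_i(Z)=f_i(Z)-Z$, which inherits strict concavity, satisfies $g_i(0)=0$, and has slope $g_i'(0)=f_i'(0)-1=r\sum_j a_j b_j-1$. A strictly concave function vanishing at $0$ has at most one further zero on $(0,\infty)$, and this extra zero exists and lies in $(0,1)$ exactly when $g_i'(0)>0$: if $g_i'(0)>0$ then $g_i>0$ just to the right of $0$ while $g_i(1)<0$ (from $f_i(1)<1$), forcing a unique sign change inside $(0,1)$; if $g_i'(0)\le 0$ then concavity makes $g_i$ strictly decreasing on $(0,\infty)$, hence negative there, so no nontrivial root exists. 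Writing out the threshold $g_i'(0)>0$ gives $r\big[(N_1-v)+\sigma N_2+\varepsilon v\big]>1$ for $f_1$, i.e.\ $r>r_1$, and $r\big[N_1+\sigma(N_2-v)+\sigma\varepsilon v\big]>1$ for $f_2$, i.e.\ $r>r_2$, matching the stated thresholds.

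Part (iii) is then immediate from this sign picture: for $r>r_i$ the concave function $g_i$ is positive on $(0,Z_i(r))$ and negative on $(Z_i(r),1]$, which is precisely the claimed pair of equivalences between the position of $Z$ relative to $Z_i(r)$ and the sign of $Z-f_i(Z)$. Part (iv) reduces to comparing denominators: I would rewrite the denominator of $r_1$ as $N_1+\sigma N_2-v(1-\varepsilon)$ and that of $r_2$ as $N_1+\sigma N_2-\sigma v(1-\varepsilon)$, and since $\sigma>1$, $v>0$ and $1-\varepsilon>0$ imply $\sigma v(1-\varepsilon)>v(1-\varepsilon)$, the latter denominator is strictly smaller, giving $r_1<r_2$.

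The whole argument is essentially one concavity observation applied to both functions, so I do not anticipate a genuine obstacle; the only point that needs a little care is confirming the nontrivial root lands in $(0,1)$ rather than merely in $(0,\infty)$, which is exactly why I invoke $f_i(1)<1$ — the epidemiologically evident fact that a final size is a genuine fraction of the population. Everything else is routine differentiation and an algebraic comparison.
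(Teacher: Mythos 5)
Your proof is correct and follows essentially the same route as the paper: both analyze the concave function $f_i(Z)-Z$, which vanishes at $0$ and is negative at $1$, conclude a unique root in $(0,1)$ exists iff the slope at $0$ is positive (equivalent to $r>r_i$), read off (iii) from the resulting sign pattern, and obtain (iv) by direct algebraic comparison of the denominators. Your write-up merely makes explicit the details the paper's terse proof leaves implicit (the unified form $1-\sum_j a_j e^{-b_j rZ}$ with $\sum_j a_j=1$, and the case $g_i'(0)\le 0$), so there is nothing to correct.
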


\begin{proof} 
Define $h_i(Z)=f_i(Z)-Z$ ($i=1,2$).
$h_i(Z)$ is concave and satisfies $h_i(0)=0$, $h_i(1)<0$, so that $h_i(Z)=0$ has a unique solution $Z_i(r)\in (0,1)$ if and only if $h_i'(0)>0$, which is equivalent to the condition $r>r_i$.
To obtain (iii) note that when $r>r_i$,  $h_i(Z)$ is positive for $Z>0$ small, while $h_i(1)<0$. Since the only point in
$(0,1)$ at which $h_i$ vanishes is $Z=Z_i(r)$ we conclude that $h_i(Z)>0$ for $Z\in (0,Z_1(r))$ and $h_i(Z)<0$ for $Z\in (Z_1(r),1)$, which gives \eqref{ine}.
Finally, (iv) follows by direct calculation, using the definitions of $r_1,r_2$ and the assumption that $\sigma>1$.
\end{proof}

Note that when $r\leq r_i$, we have
$Z_i(r)=0$, that is an epidemic is averted by allocating the 
vaccine to group $i$. Thus if $r\leq r_1$ then
$Z_1(r)=Z_2(r)=0$, hence herd immunity 
can be reached by allocating the vaccine to either one of the
two groups, and when
$r_1<r\leq r_2$ we have $Z_1(r)>0,Z_2(r)=0$,
hence for $r$ in this range one should certainly 
vaccinate the more susceptible group $2$, and thus attain herd immunity. We are interested in the case 
$r>r_2$, which is equivalent to $R_0>R_0^{crit}$ (see \eqref{crit} above), so this will condition will be assumed below.

\begin{lemma}\label{c2}
Assume $r>r_2$.
Define $\kappa=\kappa(\sigma,\varepsilon)$ by $\kappa=\ln\left(\frac{1}{\alpha}\right)$
where $\alpha$ is the (unique) solution of  \eqref{xse}.
Then 
$f_1(Z)>f_2(Z)$ when $0<Z<\frac{\kappa}{r}$ and $f_2(Z)>f_1(Z)$ when $Z>\frac{\kappa}{r}$.
\end{lemma}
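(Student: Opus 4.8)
The plan is to reduce the comparison of $f_1$ and $f_2$ to the sign of a single function of one variable. First I would subtract the two right-hand sides defined just above \eqref{f1}--\eqref{f2}; the constants cancel and the remaining susceptible and vaccinated contributions recombine, leaving
\[
f_1(Z)-f_2(Z)=v\left(e^{-rZ}-e^{-\sigma rZ}-e^{-\varepsilon rZ}+e^{-\sigma\varepsilon rZ}\right).
\]
Introducing $x=e^{-rZ}$, which is a strictly decreasing bijection of $Z\in(0,\infty)$ onto $x\in(0,1)$, this becomes $v\,G(x)$ with $G(x)=x-x^{\sigma}-x^{\varepsilon}+x^{\sigma\varepsilon}$. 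Since $v>0$, everything reduces to locating the sign changes of $G$ on $(0,1)$, and I note that \eqref{xse} is exactly the equation $G(x)=0$.

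Next I would analyze $G$ on $(0,1)$. The boundary data are clean: $G(1)=0$; as $x\to 0^+$ the smallest exponent $\varepsilon$ dominates so $G(x)\sim -x^{\varepsilon}<0$; and $G'(1)=1-\sigma-\varepsilon+\sigma\varepsilon=(1-\varepsilon)(1-\sigma)<0$ (using $\varepsilon<1<\sigma$), so $G$ is strictly positive just to the left of $1$. The intermediate value theorem then produces at least one zero of $G$ in $(0,1)$. To pin down the sign pattern I need this interior zero to be unique, and that is the crux of the argument.

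For uniqueness I would order the four exponents as $\varepsilon<\{1,\sigma\varepsilon\}<\sigma$ and read off the coefficient signs $(-,+,+,-)$, which exhibit exactly two sign changes. By the generalized Descartes' rule of signs for real exponential sums (Pólya's theorem), $G$ has at most two positive zeros. One of them is $x=1$; combined with the interior zero just found, this forces exactly one zero $\alpha\in(0,1)$, and together with the boundary signs it yields $G<0$ on $(0,\alpha)$ and $G>0$ on $(\alpha,1)$. Finally I would translate back: with $\kappa=\ln(1/\alpha)$ one has $G(e^{-rZ})>0\iff e^{-rZ}>\alpha\iff Z<\kappa/r$, and the reverse inequality for $Z>\kappa/r$; multiplying by $v>0$ gives $f_1(Z)>f_2(Z)$ on $(0,\kappa/r)$ and $f_1(Z)<f_2(Z)$ for $Z>\kappa/r$, as claimed.

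The hard part will be the uniqueness of the interior zero, since $G$ is not a polynomial and the ordinary Descartes rule does not apply directly; the clean route is to invoke the exponential-sum version of the rule. If a self-contained argument is preferred, I would rewrite $G(x)=0$ as $P(x)=P(x^{\sigma})$ with $P(y)=y^{\varepsilon}-y$, observe that $P$ is unimodal on $(0,1)$ (increasing then decreasing, vanishing at both endpoints), and track how the two arguments $x^{\sigma}<x$ straddle the hump of $P$ to show the equality can hold at only one interior point. Either way, the remaining steps — the cancellation computation and the monotone change of variables — are routine.
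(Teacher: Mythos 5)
Your proof is correct, and up to the crux it follows the paper's own route: the same cancellation $f_1(Z)-f_2(Z)=v\left(e^{-rZ}-e^{-\sigma rZ}-e^{-\varepsilon rZ}+e^{-\sigma\varepsilon rZ}\right)$, the same substitution $x=e^{-rZ}$ reducing the claim to the sign of $g(x)=x-x^{\sigma}-x^{\varepsilon}+x^{\sigma\varepsilon}$ on $(0,1)$, and the same boundary analysis ($g(x)\sim -x^{\varepsilon}$ as $x\to 0^{+}$, $g(1)=0$, $g'(1)=(1-\sigma)(1-\varepsilon)<0$) plus the intermediate value theorem for existence of an interior zero. (Incidentally, your computation confirms a sign typo in the paper's displayed equivalence: the middle condition there should read $>0$, consistent with its final conclusion $g\left(e^{-rZ}\right)>0$.) Where you genuinely differ is the uniqueness of the root of \eqref{xse}, which is the heart of the lemma. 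The paper argues by hand: it rewrites $g(x)=0$ as $x^{\sigma-1}+x^{\varepsilon-1}-x^{\sigma\varepsilon-1}=1$, assumes two interior solutions, and applies Rolle's theorem twice (multiplying by $x^{2-\sigma\varepsilon}$ in between) to reach the contradiction $\sigma^{2}<1$. You instead invoke the Laguerre--P\'olya extension of Descartes' rule of signs to sums of real powers: the coefficient pattern is $(-,+,+,-)$ regardless of how $1$ and $\sigma\varepsilon$ are ordered, so $g$ has at most two positive zeros counted with multiplicity; $x=1$ is a simple one (as $g'(1)\neq 0$), so exactly one lies in $(0,1)$. This is a legitimate shortcut---the paper's double-Rolle computation is essentially a proof of that rule in this special case---but it rests on a citation (e.g.\ P\'olya--Szeg\H{o}) rather than being self-contained, so cite it precisely if you use it. Your fallback is in fact the most transparent of the three arguments: with $P(y)=y^{\varepsilon}-y$, which increases on $(0,y^{*})$ and decreases on $(y^{*},1)$ where $y^{*}=\varepsilon^{1/(1-\varepsilon)}$, one has $g(x)=P(x^{\sigma})-P(x)$; strict monotonicity of $P$ on each side of $y^{*}$ forces any interior zero to satisfy $x^{\sigma}<y^{*}<x$, and on that range $g'(x)=\sigma x^{\sigma-1}P'(x^{\sigma})-P'(x)>0$ since $P'(x^{\sigma})>0>P'(x)$, so $g$ is strictly increasing there and the zero is unique. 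Writing out that one monotonicity line would turn your sketch into a complete, self-contained proof, arguably cleaner than the paper's.
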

\begin{proof}

Define
\begin{equation}\label{defg}g(x)= x-x^{\sigma}-x^{\varepsilon}+x^{\sigma\varepsilon}.
\end{equation}
We first prove that the equation $g(x)=0$
has a unique solution $\alpha\in (0,1)$, and
that
\begin{equation}\label{pg}
0<x<\alpha\;\;\Rightarrow\;\; g(\alpha)<0, \;\;\;
\alpha<x<1\;\;\Rightarrow \;\; g(x)>0.
\end{equation}
We have $g(0)=g(1)=0$ and,
since 
$$\lim_{x\rightarrow 0+}\frac{g(x)}{x} =\lim_{x\rightarrow 0+}\left[(1-x^{\sigma-1})+x^{\varepsilon-1}(x^{(\sigma-1)\varepsilon}-1)\right]=-\infty$$
we have $g(x)<0$ for $x>0$ sufficiently small.
$$g'(x)= 1-\sigma x^{\sigma-1}-\varepsilon x^{\varepsilon-1}+\sigma\varepsilon x^{\sigma\varepsilon-1}\;\;\Rightarrow\;\;g'(1)=(1-\sigma)(1-\varepsilon)<0$$
implies $g(x)>0$ for $x<1$ sufficiently close to $1$.
Therefore there 
exists $\alpha\in (0,1)$ with $g(\alpha)=0$, that is a solution of \eqref{xse}. 
We now show that this solution is 
unique, and hence that \eqref{pg} holds.
Note that \eqref{xse} is equivalent to
$$h(x)=x^{\sigma-1}+x^{\varepsilon-1} -x^{\sigma\varepsilon-1}=1.$$
Assume by way of contradiction that
$h(x)=1$ 
has two solutions in $(0,1)$, 
in addition to the solution $x=1$.
Then, by Rolle's theorem, the function
$$h'(x)=(\sigma-1) x^{\sigma-2}+(\varepsilon-1)x^{\varepsilon-2} -(\sigma\varepsilon-1)x^{\sigma\varepsilon-2}$$
has at least two zeros in $(0,1)$, and 
multiplying both sides by $x^{2-\sigma\varepsilon}$, the function
$$x^{2-\sigma\varepsilon}h'(x)=(\sigma-1) +(\varepsilon-1)x^{\varepsilon-\sigma} -(\sigma\varepsilon-1)x^{\sigma(\varepsilon-1)}$$
has at least two zeros in $(0,1)$,
hence, again by Rolle's theorem, the function
$$[x^{2-\sigma\varepsilon}h'(x)]'= (\varepsilon-1)(\varepsilon-\sigma)x^{\varepsilon-\sigma-1} -(\sigma\varepsilon-1)\sigma(\varepsilon-1)x^{\sigma(\varepsilon-1)-1}$$
has at least one zero $x\in (0,1)$. 
But this implies
$$x^{(\sigma-1)\varepsilon}=\frac{\sigma-\varepsilon}{(1-\sigma\varepsilon)\sigma}, $$
hence, in view of the assumptions $\sigma>1$,
$\varepsilon\in (0,1)$,
$$0<\frac{\sigma-\varepsilon}{(1-\sigma\varepsilon)\sigma} <1\;\;\Rightarrow\;\;\sigma-\varepsilon <(1-\sigma\varepsilon)\sigma\Rightarrow\;\;
\sigma^2<1,$$
contradicting the assumption $\sigma>1$.

Noting now that we have
\begin{equation*}\label{aa}f_1(Z)>f_2(Z)\;\;\Leftrightarrow\;\;
e^{- rZ}
 -e^{-\sigma rZ}- e^{-\varepsilon rZ}+ e^{-\sigma\varepsilon rZ}<0\;\;\Leftrightarrow\;\;
 g\left(e^{- rZ}\right)>0,
\end{equation*}
\eqref{pg} implies
$$rZ<\kappa=-\ln\left(\alpha\right) \;\;\Rightarrow\;\; e^{-rZ}>\alpha\;\;\Rightarrow\;\;g(e^{-rZ})>0\;\;\Rightarrow\;\; f_1(Z)>f_2(Z),$$
and similarly $rZ>\kappa$ implies 
$f_1(Z)<f_2(Z)$.
\end{proof}

\begin{lemma}\label{ad}
Let $\alpha$ be the unique solution of \eqref{xse},
$\kappa=\ln\left(\frac{1}{\alpha}\right)$, and
\begin{equation}\label{drs}r^*=\frac{\kappa}{1- (N_1-v)\alpha-N_2\alpha^\sigma- v \alpha^\varepsilon}.\end{equation}
Then we have:
$$r_2<r<r^*\;\;\Rightarrow\;\; Z_2(r)<Z_1(r)$$
$$r>r^*\;\;\Rightarrow\;\; Z_1(r)<Z_2(r)$$
\end{lemma}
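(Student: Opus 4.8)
The plan is to read off the ordering of $Z_1(r)$ and $Z_2(r)$ directly from the geometry already assembled in Lemmas~\ref{bp} and~\ref{c2}, with no further heavy computation. Recall that each $Z_i(r)$ is the unique nontrivial fixed point of the concave map $f_i$, and that by Lemma~\ref{bp}(iii) the sign of $f_i(Z)-Z$ records on which side of $Z_i(r)$ a given $Z$ lies. Lemma~\ref{c2} supplies the single crossing abscissa of the two curves: $f_1>f_2$ on $(0,\kappa/r)$ and $f_2>f_1$ on $(\kappa/r,1)$. So everything hinges on locating the two fixed points relative to the crossing point $\kappa/r$.

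The comparison principle I will use is this: if $f_1(Z_2)>Z_2$ then, by Lemma~\ref{bp}(iii), $Z_2<Z_1(r)$; symmetrically, if $f_2(Z_1)>Z_1$ then $Z_1<Z_2(r)$. Thus it suffices to evaluate one curve at the other's fixed point, and for that I only need to know whether that fixed point sits to the left or to the right of $\kappa/r$, since the ordering of $f_1,f_2$ is then fixed by Lemma~\ref{c2}.

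The observation that makes the location explicit is that the common value of the two curves at the crossing is \emph{independent of} $r$. Indeed, since $r\cdot(\kappa/r)=\kappa=\ln(1/\alpha)$, each exponential $e^{-\mu rZ}$ evaluated at $Z=\kappa/r$ collapses to $\alpha^{\mu}$, so that
\begin{equation*}
f_1(\kappa/r)=f_2(\kappa/r)=1-(N_1-v)\alpha-N_2\alpha^{\sigma}-v\alpha^{\varepsilon}=:A,
\end{equation*}
a constant (the equality of the two expressions is exactly \eqref{xse}). Comparing $A$ with the diagonal height $\kappa/r$, which is strictly decreasing in $r$, pins down the side via Lemma~\ref{bp}(iii): when $A<\kappa/r$ we have $f_i(\kappa/r)<\kappa/r$ for both $i$, hence $Z_i(r)<\kappa/r$; when $A>\kappa/r$ both fixed points exceed $\kappa/r$. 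The threshold $A=\kappa/r$ is exactly $r=\kappa/A=r^*$ as defined in~\eqref{drs}, and monotonicity of $\kappa/r$ guarantees the sign of $A-\kappa/r$ switches once, precisely at $r^*$.

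Assembling the pieces: for $r_2<r<r^*$ one has $A<\kappa/r$, so both fixed points lie to the left of the crossing; there $f_1>f_2$, whence $f_1(Z_2)>f_2(Z_2)=Z_2$ and therefore $Z_2<Z_1$. For $r>r^*$ one has $A>\kappa/r$, so both fixed points lie to the right; there $f_2>f_1$, whence $f_2(Z_1)>f_1(Z_1)=Z_1$ and therefore $Z_1<Z_2$. Given the two preceding lemmas this is a short argument; the only point I would double-check most carefully is the consistency that the two fixed points always lie on the \emph{same} side of $\kappa/r$. This is exactly what the equality $f_1(\kappa/r)=f_2(\kappa/r)=A$ secures, since Lemma~\ref{bp}(iii) then places both fixed points on whichever side the single sign of $A-\kappa/r$ dictates.
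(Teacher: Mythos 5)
Your proof is correct, but it reaches the key localization step by a genuinely different route than the paper. The paper establishes where $Z_1(r)$ sits relative to $\kappa/r$ dynamically: it sets $\phi(r)=rZ_1(r)$, invokes the monotonicity of $Z_1(r)$ in $r$ (asserted without proof), deduces that $\phi$ crosses the level $\kappa$ exactly once at some $\bar r$, and then identifies $\bar r=r^*$ by substituting $rZ_1=\kappa$ into the fixed-point relation $f_1(Z_1)=Z_1$. You instead make a static observation: since every exponential collapses to a power of $\alpha$ at $Z=\kappa/r$, the common value $f_1(\kappa/r)=f_2(\kappa/r)=A$ (the equality being exactly $v$ times \eqref{xse}) is independent of $r$, so Lemma \ref{bp}(iii) places \emph{both} fixed points on the side of $\kappa/r$ determined by the sign of $A-\kappa/r$, and that sign flips exactly once, at $r=\kappa/A=r^*$, because $\kappa/r$ is strictly decreasing while $A$ is constant. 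This buys you two things: you never need the monotonicity of $Z_1(r)$ in $r$ (which the paper uses but does not prove), and you locate $Z_2$ as well as $Z_1$, making the two implications symmetric; the paper only locates $Z_1$ and then compares. The closing step --- feeding the location into Lemma \ref{c2} and Lemma \ref{bp}(iii) to order $Z_1$ and $Z_2$ --- is the same in both arguments. One pedantic point you may wish to add: Lemma \ref{bp}(iii) is stated for $Z\in(0,1]$, so when $\kappa/r>1$ (possible for $r$ near $r_2$ if $\alpha$ is small) you cannot evaluate it at $Z=\kappa/r$; but in that case necessarily $A<1<\kappa/r$, and the desired conclusion $Z_i(r)<\kappa/r$ holds trivially since $Z_i(r)<1$, so your argument is unaffected.
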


\begin{proof}
We claim that
\begin{eqnarray}\label{ny}
&&r_1<r<r^*\;\;\Rightarrow\;\; Z_1(r)<\frac{\kappa}{r},\nonumber\\
&&r>r^*\;\;\Rightarrow\;\; Z_1(r)>\frac{\kappa}{r}.
\end{eqnarray}
To see this, consider the function
$\phi(r)=rZ_1(r)$.
Since $Z_1(r)$ is increasing in $r$, $\phi(r)$ is also increasing, and we have $\phi(r_1)=0$ and $\phi(r)\rightarrow \infty$ as
$r\rightarrow \infty$. Therefore there is a unique $\bar{r}>r_1$ satisfying
\begin{equation}\label{bs1}\bar{r}Z_1(\bar{r})=\kappa.\end{equation}
Since $f_1(Z_1(\bar{r}))=Z_1(\bar{r})$ we have
\begin{equation}\label{vv}1- (N_1-v)e^{- \bar{r}Z_1(\bar{r})}-N_2e^{-\sigma \bar{r}Z_1(\bar{r})}- v e^{-\varepsilon \bar{r}Z_1(\bar{r})}=Z_1(\bar{r}).\end{equation}
Substituting \eqref{bs1} in \eqref{vv} gives
\begin{equation*}\label{vv1}1- (N_1-v)e^{- \kappa}-N_2e^{-\sigma \kappa}- v e^{-\varepsilon \kappa}=\frac{\kappa}{\bar{r}}
\;\;\Rightarrow\;\;\bar{r}=\frac{\kappa}{1- (N_1-v)e^{- \kappa}-N_2e^{-\sigma \kappa}- v e^{-\varepsilon \kappa}},\end{equation*}
so we have that $\bar{r}=r^*$ as defined by \eqref{drs}.
We have thus shown that $\phi(r^*)=\kappa$, and by monotonicity of $\phi(r)$, we obtain \eqref{ny}.

Now if $r_2<r<r^*$ then $Z_1(r)<\frac{\kappa}{r}$, so by  Lemma \ref{c2} we have
\begin{equation*}\label{xx1}Z_1(r)=f_1(Z_1(r))>f_2(Z_1(r)),\end{equation*}
which, by Lemma \ref{bp}(iii), implies
$Z_2(r)<Z_1(r)$. Similarly if $r>r^*$ we get 
$Z_2(r)>Z_1(r)$.
\end{proof}

In view of Lemma \ref{ad} we have, defining
\begin{equation*}\label{R01}R_0^*=(N_1+N_2\sigma)r^*=\frac{(N_1+N_2\sigma)\ln(\frac{1}{\alpha})}{1- (N_1-v)\alpha-N_2\alpha^\sigma- v \alpha^\varepsilon},\end{equation*}
that vaccinating the more susceptible group $2$ is preferable if $R_0<R_0^*$, while
vaccinating group $1$ is preferable if $R_0>R_0^*$. 
Using \eqref{xse}, $R_0^*$ can also be written as
$$R_0^*=\frac{(N_1+N_2\sigma)\ln(\frac{1}{\alpha})}{1- N_1\alpha-N_2\alpha^\sigma- v (\alpha^\varepsilon-\alpha)}.$$
The proof of Theorem \ref{main} will be completed by
showing that no mixed vaccination schedule outperforms 
the two possibilities considered above.

\subsection{Optimality of vaccinating one group}
\label{mixed}

We now analyze the optimization problem defined in \eqref{SIop} and show that 
the solution is necessarily a boundary solution: either $(v_1,v_2)=(v,0)$
or $(v_1,v_2)=(0,v)$. We reformulate the problem as a one-dimensional problem by
setting
$$\zeta(v_1)=Z(v_1,v-v_1),$$
and we wish to show that $\zeta(v_1)$ is maximized either at $v_1=0$ or
at $v_1=v$. 
We will in fact prove:

\begin{lemma}
Let $r^*$ be defined by \eqref{drs}. Assume $r>r_2$. Then:

(i) If $r\neq r^*$ then $\zeta'(v_1)\neq 0$ for all
$v_1\in (0,v)$, so that $\zeta(v_1)$ is maximized either at 
$v=0$ or at $v=v_1$.

(ii) If $r=r^*$ then $\zeta(v)=\frac{\kappa}{r^*}$ for all $v\in [0,v]$, that is $\zeta(v)$ a constant function.
\end{lemma}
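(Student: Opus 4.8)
The plan is to treat $\zeta(v_1)=Z(v_1,v-v_1)$ via the implicit function theorem applied to the reformulated final size equation \eqref{SIfsz}, written as $F(Z,v_1)=f(Z;v_1)-Z=0$ with
\[
f(Z;v_1)=1-(N_1-v_1)e^{-rZ}-(N_2-v+v_1)e^{-\sigma rZ}-v_1e^{-\varepsilon rZ}-(v-v_1)e^{-\sigma\varepsilon rZ}.
\]
For each fixed $v_1\in[0,v]$ this map $f(\,\cdot\,;v_1)$ is increasing and concave with $f(0;v_1)=0$, exactly as in Lemma \ref{bp}. Since the post-vaccination threshold $1/D(v_1)$, where $D(v_1)=(N_1-v_1)+\sigma(N_2-v+v_1)+\varepsilon v_1+\sigma\varepsilon(v-v_1)$, is monotone in $v_1$ and lies between $r_1$ and $r_2$, the hypothesis $r>r_2$ guarantees that for every $v_1$ there is a unique nontrivial root $Z=\zeta(v_1)\in(0,1)$, at which concavity forces $\partial_Z f(\zeta(v_1);v_1)<1$. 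Thus $\partial_Z F=\partial_Z f-1<0$ throughout, the implicit function theorem applies, and $\zeta$ is differentiable.

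First I would compute the two partial derivatives. Differentiating $F$ in $v_1$ at fixed $Z$ gives, after the four terms combine,
\[
\frac{\partial F}{\partial v_1}=e^{-rZ}-e^{-\sigma rZ}-e^{-\varepsilon rZ}+e^{-\sigma\varepsilon rZ}=g\!\left(e^{-rZ}\right),
\]
where $g$ is precisely the function \eqref{defg} analysed in Lemma \ref{c2}. Hence
\[
\zeta'(v_1)=-\frac{\partial F/\partial v_1}{\partial F/\partial Z}=\frac{g\!\left(e^{-r\zeta(v_1)}\right)}{1-\partial_Z f(\zeta(v_1);v_1)},
\]
and since the denominator is positive, $\operatorname{sign}\zeta'(v_1)=\operatorname{sign}g(e^{-r\zeta(v_1)})$. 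By the properties of $g$ established in Lemma \ref{c2} (unique zero $\alpha\in(0,1)$, negative on $(0,\alpha)$, positive on $(\alpha,1)$), and since $e^{-r\zeta(v_1)}\in(0,1)$, we obtain $\zeta'(v_1)=0$ iff $e^{-r\zeta(v_1)}=\alpha$, i.e.\ iff $\zeta(v_1)=\kappa/r$ with $\kappa=\ln(1/\alpha)$.

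The crux of the argument --- and the step I expect to be the only real obstacle --- is showing that an interior critical point forces $r=r^*$. Suppose $\zeta'(v_1^0)=0$ for some $v_1^0\in(0,v)$, so $\zeta(v_1^0)=\kappa/r$; substitute $Z=\kappa/r$ into \eqref{SIfsz}, using $e^{-\kappa}=\alpha$, $e^{-\sigma\kappa}=\alpha^\sigma$, $e^{-\varepsilon\kappa}=\alpha^\varepsilon$, $e^{-\sigma\varepsilon\kappa}=\alpha^{\sigma\varepsilon}$. Collecting the terms multiplying $v_1^0$ produces exactly the coefficient $\alpha-\alpha^\sigma-\alpha^\varepsilon+\alpha^{\sigma\varepsilon}=g(\alpha)=0$ by \eqref{xse}, so all dependence on $v_1^0$ cancels and the equation collapses to
\[
\frac{\kappa}{r}=1-N_1\alpha-N_2\alpha^{\sigma}+v\!\left(\alpha^{\sigma}-\alpha^{\sigma\varepsilon}\right).
\]
Using \eqref{xse} once more in the form $\alpha^{\sigma}-\alpha^{\sigma\varepsilon}=\alpha-\alpha^{\varepsilon}$, the right-hand side equals $1-(N_1-v)\alpha-N_2\alpha^{\sigma}-v\alpha^{\varepsilon}$, which is exactly $\kappa/r^*$ by the definition \eqref{drs}. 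Hence $r=r^*$. Contrapositively, if $r\neq r^*$ then $\zeta'(v_1)\neq0$ for all $v_1\in(0,v)$; being continuous and nonvanishing, $\zeta'$ has constant sign, so $\zeta$ is strictly monotone and its extrema over $[0,v]$ --- in particular the minimizer sought in \eqref{SIop} --- are attained at $v_1=0$ or $v_1=v$, proving (i).

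For (ii), the same substitution with $r=r^*$ shows directly that $Z=\kappa/r^*$ satisfies \eqref{SIfsz} for every $v_1\in[0,v]$: the $v_1$-terms cancel because $g(\alpha)=0$, and the definition \eqref{drs} makes the remaining identity hold. Since $\kappa/r^*>0$, this is the unique nontrivial root guaranteed by the $r>r_2$ analysis, so $\zeta(v_1)\equiv\kappa/r^*$ is constant, which is (ii).
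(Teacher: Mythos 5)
Your proposal is correct and follows essentially the same route as the paper: implicit differentiation of the final size equation yields $\zeta'(v_1)(1-\partial_Z f)=g\!\left(e^{-r\zeta(v_1)}\right)$, an interior critical point forces $g\!\left(e^{-r\zeta}\right)=0$ and hence $\zeta=\kappa/r$ by uniqueness of the root $\alpha$, and substituting back into the final size equation (where the $v_1$-terms cancel because $g(\alpha)=0$) forces $r=r^*$, with part (ii) verified by the same cancellation. The only difference is cosmetic: you justify differentiability of $\zeta$ and the positivity of $1-\partial_Z f$ via concavity and the implicit function theorem, a point the paper's proof uses tacitly.
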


\begin{proof}
Note that the assumption $r>r_2$  implies that $\zeta(v_1)\in (0,1)$ for all $v_1\in [0,v]$ (herd immunit cannot be achieved).

(i) Let us assume, that $\zeta'(v_1^*)=0$ for some $v_1^*\in (0,v)$, and show that $r=r^*$. 
Substituting $v_2=v-v_1$ in \eqref{SIfsz} we have, for $v_1\in [0,v]$,
\begin{equation}\label{fz1}1- (N_1-v_1)e^{- r\zeta(v_1)}-(N_2-v+v_1)e^{-\sigma r\zeta(v_1)}- v_1 e^{-\varepsilon r\zeta(v_1)}- (v-v_1)e^{-\sigma\varepsilon r\zeta(v_1)}  =\zeta(v_1),\end{equation}
and differentiating this relation with respect to $v_1$ leads to
$$r\left[ (N_1-v_1)e^{- r\zeta(v_1)}+\sigma (N_2-v+v_1)e^{-\sigma r\zeta(v_1)}+\varepsilon v_1 e^{-\varepsilon r\zeta(v_1)}+\sigma\varepsilon  (v-v_1)e^{-\sigma\varepsilon r\zeta(v_1)} \right]\zeta'(v_1)$$
$$+e^{- r\zeta(v_1)}-e^{-\sigma r\zeta(v_1)}-  e^{-\varepsilon r\zeta(v_1)}+e^{-\sigma\varepsilon r\zeta(v_1)}  =\zeta'(v_1).$$
Thus, for $v_1^*$ satisfying $\zeta'(v_1^*)=0$ we obtain
\begin{equation}\label{cr}e^{- r\zeta(v_1^*)}-e^{-\sigma r\zeta(v_1^*)}-  e^{-\varepsilon r\zeta(v_1^*)}+e^{-\sigma\varepsilon r\zeta(v_1^*)}=0.\end{equation}
On the other hand, substituting $v_1=v_1^*$ in \eqref{fz1}, we have
\begin{equation*}\label{fz2}1- (N_1-v_1^*)e^{- r\zeta(v_1^*)}-(N_2-v+v_1^*)e^{-\sigma r\zeta(v_1^*)}- v_1^* e^{-\varepsilon r\zeta(v_1^*)}- (v-v_1^*)e^{-\sigma\varepsilon r\zeta(v_1^*)}  =\zeta(v_1^*),\end{equation*}
which, together with \eqref{cr} gives
\begin{equation}\label{fz3}1- (N_1-v)e^{- r\zeta(v_1^*)}-N_2e^{-\sigma r\zeta(v_1^*)}
-  ve^{-\varepsilon r\zeta(v_1^*)}  =\zeta(v_1^*),\end{equation}
Note that \eqref{cr} is equivalent to
$$g\left(e^{-r\zeta(v_1^*)} \right)=0,$$
where $g(x)$ is defined in \eqref{defg}. Since 
in the proof of Lemma \ref{c2} we obtained that 
$g$ has a unique root $\alpha=e^{-\kappa(\sigma,\varepsilon)}$, 
we see that for $\zeta'(v_1^*)$ to hold, we need
\begin{equation}\label{ne}r\zeta(v_1^*)= \kappa(\sigma,\varepsilon).\end{equation}
Substituting \eqref{ne} into \eqref{fz3} gives
$$1- (N_1-v)e^{- \kappa}-N_2e^{-\sigma \kappa}
-  ve^{-\varepsilon \kappa} =\frac{1}{r}\kappa,\;\;\Rightarrow\;\;
r=\frac{\kappa}{1- (N_1-v)e^{- \kappa}-N_2e^{-\sigma \kappa}
-  ve^{-\varepsilon \kappa}}=r^*,$$
so we have shown that $r=r^*$ is the unique value of 
$r$ of which $\zeta(v_1)$ can have a critical point
$v_1^*\in (0,1)$.

(ii) Now we assume that $r=r^*$ and will show that 
$\zeta(v_1)=\frac{\kappa}{r^*}$ for all $v_1\in (0,v)$.
Indeed, by \eqref{fz1}, $\zeta(v_1)$ is the unique solution of
\begin{equation}\label{fz11}1- (N_1-v_1)e^{- r^*\zeta(v_1)}-(N_2-v+v_1)e^{-\sigma r^*\zeta(v_1)}- v_1 e^{-\varepsilon r^*\zeta(v_1)}- (v-v_1)e^{-\sigma\varepsilon r^*\zeta(v_1)}  =\zeta(v_1),\end{equation}
so it suffices to show that \eqref{fz11} holds identically when we substitute $\zeta(v_1)=\frac{\kappa}{r^*}$, that is to show
\begin{equation}\label{fz111}1- (N_1-v_1)e^{- \kappa}-(N_2-v+v_1)e^{-\sigma \kappa}- v_1 e^{-\varepsilon \kappa}- (v-v_1)e^{-\sigma\varepsilon \kappa}  =\frac{\kappa}{r^*}.\end{equation}
By \eqref{drs} we have
$$\frac{\kappa}{r^*}=1- (N_1-v)e^{- \kappa}-N_2e^{-\sigma \kappa}- v e^{-\varepsilon \kappa},$$
so \eqref{fz111} is equivalent, after some simplification, to 
\begin{equation}\label{fz12}e^{- \kappa}-e^{-\sigma \kappa}-  e^{-\varepsilon \kappa}+e^{-\sigma\varepsilon \kappa}  =0,\end{equation}
which holds due to the fact that $\alpha=e^{-\kappa}$
is the solution of \eqref{xse}. We have therefore proved \eqref{fz11}, finishing the proof of (ii).
\end{proof}

Our proof of Theorem \ref{main} is thus complete.

\section{Dependence of the threshold reproduction number on parameters}
We define the threshold value~$R_{\rm threshold}$ as the minimal value of~$R_0$ at which the relative error in the number of survivors (those not infected),
\begin{equation}\label{eq:SIerror}
E=\frac{Z(R_0,\vec v^*)-Z(R_0,\vec v_{\rm optimal})}{1-Z(R_0,\vec v_{\rm optimal})}
\end{equation} resulting from the asymptotic allocation~$\vec v^*$, see~\eqref{eq:SIoptimalAllocation}, and the computed optimal allocation~$\vec v_{\rm optimal}$ is smaller than 1\%. 

In the two group example considered in Theorem~\ref{main}, the transition value~$R_0^*$ which is closely related to~$R_{\rm threshold}$ is given explicitly by~\eqref{eq:R0*}.  \eqref{eq:R0*} implies that~$R_0^*$ increases both with vaccine efficacy and with vaccine coverage, see also Fig~\ref{fig:R0star}.
In Fig~\ref{fig:RthresholdVariousCountries} we consider more general cases and compute~$R_{\rm threshold}$ for various countries.  As in the example of the USA in the main text (Fig 2), age demographics for each country were taken from the UN World Population Prospects 2019~\cite{UNreport}, using nine age groups, of sizes $N_j$ ($1\leq j\le 9$) corresponding to $10$-year increments, with the last group comprising those of age $80$ and older.  The contact matrices were taken from~\cite{prem2020projecting} and adapted to  10-year increments~\cite{Bubar916}.  In all cases, we used the age dependent relative susceptibility profile for SARS-CoV-19 from~\cite{davies2020age}.  In the USA example presented in Fig 2 of the main text with vaccine of efficacy of 80\% and vaccine coverage of 55\%,~$R_{\rm threshold}\approx 5.7$.  For this choice of parameters, we observe quite small variability of~$R_{\rm threshold}$ among different countries with values ranging from $5$ to~$5.75$ and a mean of~$5.4$.  

Fig~\ref{fig:RthresholdVariousCountries}A presents~$R_{\rm threshold}$ also for vaccine efficacy of 70\% and 90\%.  As expected,~$R_{\rm threshold}$ increases with an increase in vaccine efficacy:  The average~$R_{\rm threshold}$ changes from~$5.4$ at a vaccine efficacy of 80\% to~$R_{\rm threshold}\approx4.3$ when vaccine efficacy is 70\%, and to~$R_{\rm threshold}\approx8$ when vaccine efficacy is 90\%.  We again observe quite small variability of~$R_{\rm threshold}$ among different countries.

We further consider the dependence of~$R_{\rm threshold}$ on vaccine coverage.  We find that the mean value of~$R_{\rm threshold}$ for a range of vaccine coverage values between 40\% to 70\% is~$R_{\rm threshold}\approx6$ where most values lie in the range~$4.25<R_{\rm threshold}<9.5$.  In addition, outliers with $R_{\rm threshold}\approx 12$ are observed, see Fig~\ref{fig:RthresholdVariousCountries}B.  Contrary to the behavior of~$R_0^*$ in the two group example, we find only weak correlation between~$R_{\rm threshold}$ and vaccine coverage values.  This points to the complexity of the solution structure.  
%\subsection*{Materials}
%Add a materials subsection if you need to.

% \subsection*{Methods}
% %Add a methods subsection if you need to.
% dfdsf

%%% Each figure should be on its own page
\begin{figure}
\centering
\includegraphics[width=0.8\textwidth]{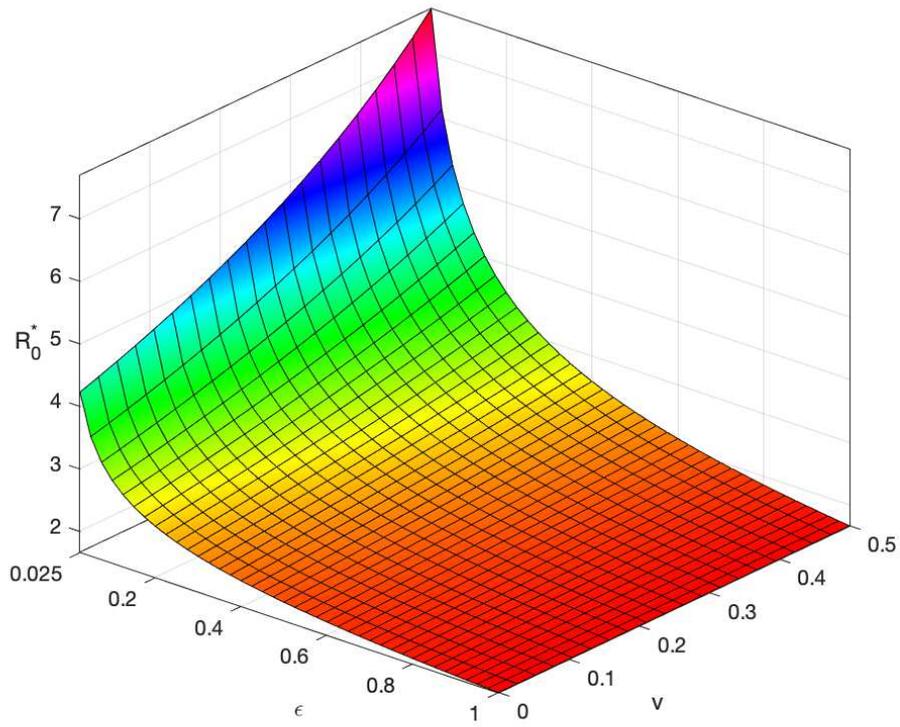}
\caption{{\bf $R_0^*$ as function of vaccine coverage and vaccine efficacy.} Graph of~\eqref{eq:R0*} for~$N_1=N_2=0.5$, $v=0.4$, $\sigma=2$.  Note that the surface is plotted in a regime~$\varepsilon>0.025$ avoiding values closer to zero since~$\lim_{\varepsilon\to0} R_0^*=\infty$}\label{fig:R0star}
\end{figure}

\begin{figure}
\centering
\includegraphics[width=0.8\textwidth]{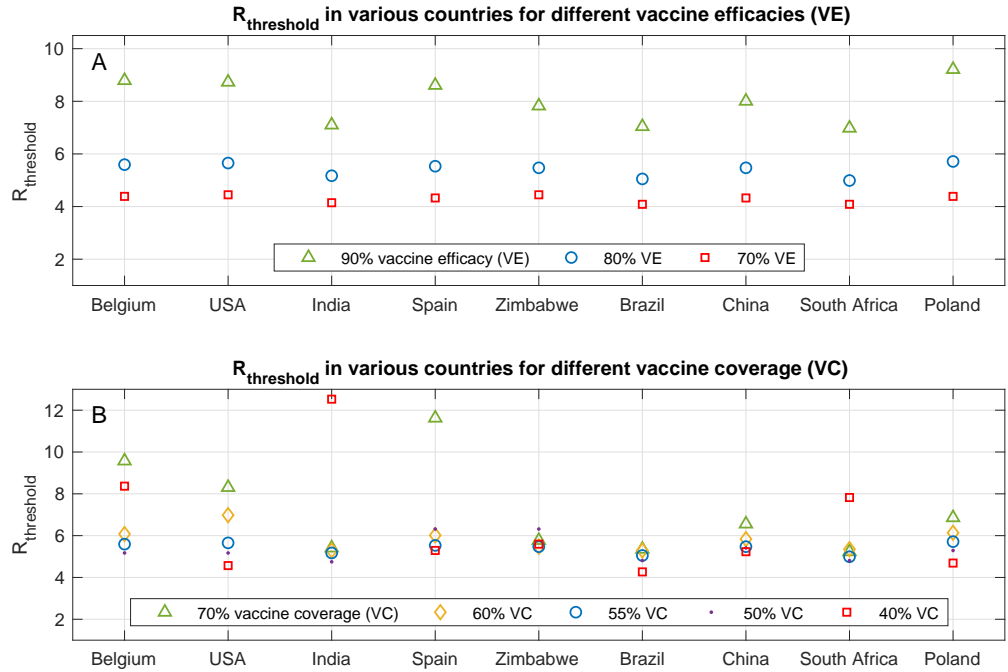}
\caption{The threshold value~$R_{\rm threshold}$ at which the relative error in the number of survivors reduces below 1\% computed for parameters corresponding to various countries and for A: different values of vaccine efficacy.  B: different values of vaccine coverage.}\label{fig:RthresholdVariousCountries}
\end{figure}
\newpage
% Bibliography

\end{document}